\algrenewcommand\algorithmicrequire{\textbf{Parameters:}}
\newtheorem{lemma}{Lemma}[section]
\newtheorem{theorem}{Theorem}[section]
\newtheorem{corollary}{Corollary}[section]
\newtheorem{definition}{Definition}[section]
\newtheorem{remark}{Remark}[section]
\newtheorem{observation}{Observation}[section]
\newcommand{\indic}[1]{\mathbbm{1}\left[#1\right]}
\newcommand{\reals}{\mathbb{R}}
\newcommand{\ALG}{\mathrm{ALG}}
\newcommand{\E}{\mathbb{E}}
\newcommand{\opt}{\text{OPT}}
\newcommand{\permutations}[1]{\mathcal{S}_{#1}}
\newcommand{\naturals}{\mathbbm{N}}
\newcommand{\class}{\mathcal{C}}
\title{The Competition Complexity of Prophet Secretary\footnote{T.~Ezra is supported by the Harvard University Center of Mathematical Sciences and Applications. T.~Garbuz is funded by the European Research Council (ERC) under the European Union's Horizon 2020 research and innovation program (grant agreement No. 101077862).}}
\author{Tomer Ezra\thanks{Harvard University, Cambridge, MA, USA. Email: \texttt{tomer@cmsa.fas.harvard.edu}. \orcidlink{0000-0003-0626-4851}} 
\and Tamar Garbuz\thanks{Tel Aviv University, Tel Aviv, Israel. Email: \texttt{garbuz@mail.tau.ac.il}. \orcidlink{0009-0009-3922-4160}}}
\date{}
\begin{document}

\maketitle
\begin{abstract}
We study the classic single-choice prophet secretary problem through a resource augmentation lens. Our goal is to bound the $(1-\epsilon)$-competition complexity for different classes of online algorithms. This metric asks for the smallest $k$ such that the expected value of the online algorithm on $k$ copies of the original instance, is at least a $(1 - \epsilon)$-approximation to the expected offline optimum on the original instance (without added copies).

We consider four natural classes of online algorithms: single-threshold, time-based threshold, activation-based, and general algorithms. 
We show that for single-threshold algorithms the $(1-\epsilon)$-competition complexity is $\Theta(\ln(\nicefrac{1}{\epsilon}))$ (as in the i.i.d. case).
Additionally, we demonstrate that time-based threshold and activation-based algorithms (which cover 
all previous approaches for obtaining competitive-ratios for the classic prophet secretary problem) yield a sub-optimal $(1-\epsilon)$-competition complexity of $\Theta\left(\frac{\ln(\nicefrac{1}{\epsilon})}{\ln\ln(\nicefrac{1}{\epsilon})}\right)$, which is strictly better than the class of single-threshold algorithms.
Finally, we find that the $(1-\epsilon)$-competition complexity of general adaptive algorithms is $\Theta(\sqrt{\ln(\nicefrac{1}{\epsilon})})$, which is in sharp contrast to $\Theta(\ln\ln(\nicefrac{1}{\epsilon}))$ in the i.i.d. case.

\end{abstract}

\pagenumbering{arabic}
\section{Introduction}
One of the most fundamental research directions in mechanism design is the tradeoff between simplicity and optimality of mechanisms. A canonical example of this tradeoff is the measure of \textit{competition complexity} in auctions, originated by \citet{BulowKlemperer96} who asked the question of how many additional bidders are needed to achieve using a simple mechanism the same revenue as the optimal (complicated) auction. In particular, \citet{BulowKlemperer96} showed that when selling a single item to a set of bidders with values drawn i.i.d. from some regular\footnote{A distribution with density $f$ and cumulative density $F$ is regular if its virtual value function $\phi(v):=v-\frac{1-F(v)}{f(v)}$ is non-decreasing.}  distribution, the (simple) second price auction with an additional bidder achieves at least the same expected revenue as the optimal distribution-tailored mechanism (the Myerson auction \citep{myerson1981optimal}). This result asserts that recruiting one more bidder is more important (in terms of revenue) than learning the exact distribution of the values of the bidders and tailoring the auction according to this information. The work of \citet{BulowKlemperer96} inspired a line of work \citep{eden2017competition,feldman201899,liu2018competition,beyhaghi2019optimal,cai202199,fu2019vickrey,babaioff2020bulow,brustle2023competition,derakhshan2024settling,DBLP:journals/corr/abs-2402-11084} that studied the competition complexity measure of more evolved scenarios such as auctions to bidders with independently but not identically distributed values, as well as auctions beyond single-item settings.
In particular, \citet{feldman201899} introduced the combination of competition complexity with approximation algorithms, where instead of surpassing the revenue of the optimal auction, they aim to use as few as possible additional bidders in a simple auction to obtain $(1-\epsilon)$-approximation of the expected revenue of the optimal mechanism.

Most of this line of work focused on two aspects of simple auctions. Namely, auctions that have less information on the priors (such as the result of \citet{BulowKlemperer96}) or auctions that are easy to implement computationally (such as selling items separately, or the grand bundle in a second price auction).
However, a key complexity aspect of implementing  mechanisms is the requirement to coordinate the bidders to participate simultaneously.

In this paper, we study the competition complexity measure in online auctions where bidders arrive sequentially and the seller needs to sell the item in an immediate and irrevocable fashion.
Sequential auctions are well-studied\footnote{Sequential auctions are also studied under the prophet inequality framework. The connection to prophet inequality was observed in \citet{chawla2010multi}, where they showed that implementing a prophet inequality algorithm on the virtual values leads to an approximation result against the expected revenue of the optimal Myerson's auction.} and are widely used in practice.
Moreover, sequential auctions are likely to attract more participants as there is no need for coordination between the bidders, and each bidder faces a simple decision of whether to purchase the item at a take-it-or-leave-it price.
\citet{brustle2023competition} introduced the question of the competition complexity of dynamic pricing with i.i.d. bidders, and showed that in contrast to  \citep{BulowKlemperer96}, the exact competition complexity of online auctions is unbounded, and analyzed the asymptotic behavior of the additional amount of bidders needed to achieve $(1-\epsilon)$-approximation of the optimal offline auction.
\citet{DBLP:journals/corr/abs-2402-11084} extended the result of \citep{brustle2023competition} to the case of independent but not identical bidders that arrive according to the block arrival model (where they ask how many sequential copies of the original instance are needed to approximate the expectation of the optimal auction on a single instance).
However, the assumption that participants arrive according to the block arrival model is very restrictive and captures only a few real-life scenarios.

The research question of our paper is to analyze the competition complexity of independent but not identical bidders that arrive in the more natural arrival model of a random order.
Sequential auctions under random arrival order relate heavily to the prophet secretary framework introduced by \citet{esfandiari2017prophet}, in which a decision-maker faces a series of random rewards $V_1,\ldots,V_n$ that arrive in a random order, and needs to select one of them in an immediate and irrevocable fashion. 
The rewards are drawn independently from some known distributions $F_1,\ldots,F_n$.
The performance of the decision-maker is measured against a ``prophet'' who knows all the realizations of the rewards in advance, and thus always selects the maximum reward.  
We model our research question using the prophet secretary framework and ask how many additional copies of the rewards are needed to approximate the maximum reward (without the additional copies).
This result implies competition complexity results for both approximating the social welfare, as well as approximating the expected revenue (under the usual regularity assumption) of the optimal offline mechanism (i.e., the second-price auction for social welfare, and the Myerson auction for revenue).

\subsection{Our Contribution}
    In this work, we characterize the asymptotic behavior of the competition complexity of the prophet secretary problem for different classes of algorithms. We consider four natural classes of algorithms for this problem in increasing levels of generality, namely: single-threshold, time-based threshold, activation-based, and general algorithms.
    
     The first class (single-threshold), considers algorithms that set a single-threshold a priori and select the first reward that exceeds it. This type of algorithms is well-studied and widely used in practice.
     The second class (time-based threshold) considers algorithms that can adapt the threshold they use based on the arrival time (but not on the identities of the current reward, nor the set of rewards that arrived so far). 
     The third class (activation-based) considers algorithms that decide (randomly) whether to activate a reward based on its identity, value and arrival time (but independent of the realizations and arrival times of the other rewards). The selected reward is the first activated reward (if such exists). 
     The last class (general) considers algorithms that can base their decisions on all available information including the identity of the current reward, as well as the identities of the rewards arriving so far.
     All previous approaches for the prophet secretary \cite{esfandiari2017prophet,ehsani2018prophet,azar2018prophet,correa2021prophet,harb2023fishing} used algorithms from the first two classes with the exceptions of \citep{correa2017posted} that provided a $(1-\nicefrac{1}{e})$-approximation using a non-adaptive personalized threshold algorithm that is a special case of the third class\footnote{Non-adaptive personalized threshold algorithms are algorithms that set a static threshold for each reward a priori (which is possibly different for different rewards). Then, they accept the first reward exceeding its own threshold.}, 
      and the recent work of \citet{chen2024prophet} that introduced the class of activation-based algorithms, and used it to obtain the currently best known competitive-ratio of $0.688$ for the prophet secretary problem.  

    Our main results (along with the known  competition complexity results of previous works) are summarized in the following table:

\begin{table}[h!]
    \centering
    \begin{tabular}{|c|c|c|c|c|}
        \hline
         & I.I.D. & Block Arrival & Adversarial Arrival & Random Arrival  \\
        \hline
        Single-Threshold & $\Theta\left( \ln\left( \nicefrac{1}{\epsilon}\right)\right)$ \citep{DBLP:journals/corr/abs-2402-11084} & $\Theta\left( \ln\left( \nicefrac{1}{\epsilon}\right)\right)$ \citep{DBLP:journals/corr/abs-2402-11084} & $\Theta\left(  \nicefrac{1}{\epsilon}\right)$ \citep{DBLP:journals/corr/abs-2402-11084} & $\Theta\left(\ln(\nicefrac{1}{\epsilon})\right)$ Thm.~\ref{thm:blind-threshold} \\
        \hline
        Time-Based Threshold &  &  &  & $\Theta\left(\frac{\ln(\nicefrac{1}{\epsilon})}{\ln\ln(\nicefrac{1}{\epsilon})}\right)$ Thm.~\ref{thm:blind} \\
        \cline{1-1} \cline{5-5}
        Activation-Based & \multirow{1}{*}{$\Theta\left( \ln\ln\left( \nicefrac{1}{\epsilon}\right)\right)$ \citep{brustle2023competition}} & \multirow{1}{*}{$\Theta\left( \ln\ln\left( \nicefrac{1}{\epsilon}\right)\right)$ \citep{DBLP:journals/corr/abs-2402-11084}} & \multirow{1}{*}{$\Theta\left(  \nicefrac{1}{\epsilon}\right)$ \citep{DBLP:journals/corr/abs-2402-11084}} & $\Theta\left(\frac{\ln(\nicefrac{1}{\epsilon})}{\ln\ln(\nicefrac{1}{\epsilon})}\right)$ Thm.~\ref{thm:active-lower} \\ \cline{1-1} \cline{5-5}
        General &  &  &  & $\Theta\left(\sqrt{\ln(\nicefrac{1}{\epsilon})}\right)$ Thm.~\ref{thm:general} \\
        \hline
    \end{tabular}
    \caption{Competition complexity results of different classes of algorithms for the different arrival models. We note that time-based threshold algorithms (and also activation-based algorithms) are as powerful as general algorithms for the models of i.i.d., block arrival, and adversarial arrival. This is because, in these models, the optimal algorithms are computed via backward induction, and can be interpreted as time-based threshold algorithms.}
\end{table}

We note that all of our main competition complexity results are asymptotically tight and are supported by explicit constructions. For positive results (upper bounds) that means explicit descriptions of efficient algorithms, and for negative results (lower bounds) that means explicit constructions of hard instances.

\paragraph{Implications of our results.} Our results reveal that while single-threshold algorithms in the random arrival model have the same asymptotic performance as in the i.i.d., and block arrival models, this is no longer true for general algorithms that perform strictly worse in the random arrival model than in the i.i.d., and block arrival models. This implies that the random arrival model is (perhaps surprisingly) harder than the block arrival model. 

Second, our results for time-based threshold and activation-based algorithms reveal that these classes are far from being optimal in terms of competition complexity. This hints that these classes of algorithms may not yield the optimal competitive-ratio in classic settings (i.e., without adding competition). Thus, previous approaches for the prophet secretary problem are likely to be far from optimal.

    \subsection{Our Techniques}
  Our analysis is based on examining the prophet secretary problem within a continuous-time model.
    In the continuous-time model, instead of having an arrival order defined by a random permutation, our arrival order defines for each reward an independent time that is distributed uniformly on the interval $[0,1]$.
    The independence of the arrival times simplify the analysis, and allows us to use several bounds developed in \cite{correa2021prophet} on the probability of not selecting a value by time $t$ for time-based threshold algorithms. 
    In particular, \citet{correa2021prophet} show that for any fixed time-based threshold algorithm, if one replaces two random rewards $V_1,V_2$ by a single reward that is distributed as their maximum, it decreases the probability of stopping by time $t$ for any  $t\in[0,1]$; and if replacing $V_1,V_2$ by two random rewards $V_1',V_2'$ that have the same distribution while maintaining the distribution of the overall maximum, it increases the probability of stopping before time $t$ for any $t\in[0,1]$.
    We utilize these facts to analyze the first two types of algorithms.
    The second ingredient we use for upper bounding the competition complexity is a stochastic domination argument. That is, for every value $x\in\reals_{\geq 0}$, we compare the probabilities that the prophet and our algorithms select a value above $x$.

      \paragraph{Single-threshold algorithms.} For single-threshold algorithms, it is sufficient to consider two types of values of $x$, values above and below the threshold. For the former, we show that a sufficient condition for the stochastic domination argument is that the expected number of times we observe each type of rewards is at least $1$. 
    For the latter, it is sufficient to show that the probability that our algorithm selects a value overall is sufficiently large.

      \paragraph{Time-based threshold  algorithms.} For time-based threshold algorithms, we use a decreasing threshold function that at the beginning (for time $t\in[0,\nicefrac{2}{k}]$) uses the same threshold of the single-threshold algorithm we analyze. Whenever the expected number of times we observe each type of rewards is at least $1$ (which is the sufficient condition for the first case of the analysis of single-threshold algorithms) we can start decreasing the threshold. We decrease the threshold according to the quantiles of the prophet at the rate of $\frac{1}{tk}$, where $t$ is the time, and $k$ is the number of copies of each reward. That is, at time $t$ we use a threshold $\tau(t)$ such that the probability of the prophet being below $\tau(t)$ is $\frac{1}{tk}$. This rate is (up to constant) tight in order to maintain the stochastic dominance relation to the value of the prophet, for each value above $\tau(t)$. Lastly, for values below the last used threshold that correspond to quantile $\frac{1}{k}$ of the prophet, we analyze the probability of not selecting a value overall.

    \paragraph{General algorithms.} Our most technical proof is analyzing the class of general algorithms, where all previous algorithms for the prophet secretary problem developed for competitive-ratio analysis used only non-adaptive strategies (either time-based threshold algorithms, or activation-based algorithms).  
    Our algorithm uses only two thresholds: $\tau_1$ which corresponds to a constant-quantile of the prophet, and $\tau_2$ which corresponds to a $e^{-\sqrt{\ln(1/\epsilon)}}$-quantile of the prophet. The adaptivity plays the role of deciding when to switch between the two thresholds. To set the switching time we define the function $q(t)$ to be the probability of all the rewards from time $t$ (included) having a value below the threshold $\tau_2$. Then the switching time is the last time where $q(t) \leq \epsilon$.
    This adaptive switching time guarantees that the probability of selecting a value overall is at least $1-\epsilon$, which prove the stochastic dominance condition for the case of $x < \tau_2$.
    Also, we show that with high enough probability, the switching time between the two thresholds is late enough, such that the analysis of $x>\tau_1$ is similar to the single-threshold case, and the probability of selecting a value before the switching time (which ensures a value of at least $\tau_1$) is larger than the probability of the prophet having a value above $\tau_2$ which resolves the remaining case of $x\in[\tau_2,\tau_1]$.
Our careful choice for $\tau_1$ and $\tau_2$ combined with the adaptive switching time yield the optimal tradeoff for the different cases in our analysis.

\subsection{Further Related Work}
\textbf{Competition complexity.}
Since the work of \citet{BulowKlemperer96} that introduced the competition complexity measure focusing on the single-item case, it has been extended through several dimensions of complexity.
\citet{eden2017competition} and \citet{beyhaghi2019optimal} studied the multi-item setting under additive valuations over independent items, and analyzed the additional amount of bidders necessary for the VCG mechanism to achieve at least the same revenue as the optimal (prior-dependent, computationally unbounded) mechanism as a function of the number of items, and the number of agents. \citet{eden2017competition} devised a lower bound of $\log(m)$ and an upper bound of $n + 2m - 2$ where $n$ is the number of bidders in the original instance, and $m$ is the number of items. \citet{beyhaghi2019optimal} improved the upper bound to $\min\{n(\ln(1 + \nicefrac{m}{n}) + 2),9 \sqrt{nm}\}$. 
\citet{derakhshan2024settling} settled the competition complexity for the case where the number of items is smaller than the number of bidders, proving that the competition complexity is $\Theta(\sqrt{nm})$.
\citet{feldman201899} initiated the study of combining the competition complexity measure with approximation algorithms, showing that near-optimal revenue can be achieved with simple mechanisms (either selling separately, or selling all items as a single bundle) for several settings with a logarithmic or constant multiplicative factor of additional bidders.
\citet{cai202199} improved the result of \citet{feldman201899} showing that an $O(n)$ additional number of bidders is always suffice for obtaining a near-optimal revenue for additive valuations over independent items. \citet{fu2019vickrey} demonstrated that for obtaining a constant approximation of the optimal revenue, it is suffice to add only a single bidder from one of the distributions of the original instance and apply the VCG mechanism.

\citet{liu2018competition} studied the competition complexity in dynamic settings, finding that at most $3n$ additional bidders are required to achieve revenue comparable to the optimal dynamic auction. For two-sided markets, \citet{babaioff2020bulow} developed a budget-balanced, truthful, and prior-independent mechanism that approximates the optimal welfare by adding additional buyers under stochastic dominance assumptions.

Lastly, for the competition complexity of the prophet inequality framework, beyond the works of \citet{brustle2023competition,DBLP:journals/corr/abs-2402-11084} that were mentioned earlier, \citet{ezra2024competition} extended this framework to instances with correlations between rewards in the block and adversarial arrival models, where different copies of rewards are independent.

\textbf{Prophet secretary.}
The prophet secretary problem has been studied extensively since the work of \citet{esfandiari2017prophet} that initiated it.  \citet{correa2017posted} developed a $(1 - \nicefrac{1}{e})$-competitive algorithm using personalized thresholds in a non-adaptive mechanism.
\citet{ehsani2018prophet} showed that the approximation of $1 - \nicefrac{1}{e}$ can be achieved with a single-threshold, and also considered the multi-choice case under matroid feasibility constraints, and combinatorial auctions for which they provide the same guarantee of $1 - \nicefrac{1}{e}$. 
\citet{azar2018prophet} were the first to surpass the $1 - \nicefrac{1}{e}$ barrier with a refined case analysis, developing three time-based threshold algorithms, where the best of the three algorithms achieves a competitive-ratio of $0.6346$. 
\citet{correa2021prophet} introduced the sub-class of time-based threshold algorithms which they termed \textit{blind algorithms}, that use a non-increasing sequence of thresholds based on the distribution of the maximum, and used this class to devise a $0.669$-competitive algorithm. \citet{harb2023fishing} improved the competitive-ratio to $ 0.6724$ using a blind algorithm. Recently, \citet{chen2024prophet} introduced the class of activation-based algorithms and used it to obtain a $0.688$-competitive algorithm surpassing the $0.675$ upper bound for the class of blind algorithms shown by \citet{correa2021prophet}.

In terms of upper bounds, \citet{bubna2023prophet} showed a separation of the competitive-ratios achievable for the prophet secretary problem, and the free-order prophet inequality (where the decision-maker can choose the order of arrival). The best known upper bound on the competitive-ratio for the prophet secretary problem is $0.7235$ \citep{giambartolomei2023prophet}.

Beyond the single-choice case, \citet{arnosti2023tight} 
focused on settings where the algorithm (and the prophet) are allowed to select at most $k$ rewards, and provided a tight constant guarantee for the class of single-threshold algorithms of $\gamma_k \approx 1 - \frac{1}{\sqrt{2 \pi k}}$.
\citet{lee2018optimal} introduced random order contention resolution schemes (RCRS) which is a type of rounding schemes for online algorithms in random arrival models. They used this type of algorithms to provide a $(1-\nicefrac{1}{e})$-approximation algorithm for the prophet secretary problem under matroid feasibility constraints.
\citet{fu2021random} developed a $\frac{8}{15}$-approximation algorithm for the prophet secretary problem for random vertex arrival through the RCRS approach, which was improved by
\citet{macrury2024random} to a $0.535$-approximation for general graphs, and to a tight $0.567$-approximation for bipartite graphs.

Beyond competitive analysis (when comparing to the prophet),  \citet{dutting2023prophet} developed a polynomial-time approximation scheme (PTAS) that approximate the optimal online algorithm (that is not bounded computationally).
Lastly, \citet{cristi2024prophet} considered the sample complexity of the prophet inequality framework. They showed that a constant number of samples per distribution suffices to achieve near optimal competitive-ratios for different variants of the prophet inequality problem, including prophet secretary.

\subsection{Organization}
The paper is organized as follows. We formally state our model in Section~\ref{sec:model} where in Section~\ref{sec:prelim} we present known facts from previous works. In Section~\ref{sec:single} we provide our result for the class of single-threshold algorithms. In Section~\ref{sec:time} we provide our result for the class of time-based threshold algorithms. In Section~\ref{sec:general} we provide our result for the class of general algorithms. Lastly, in Appendix~\ref{app:ex-activation} we provide our result for the class of activation-based algorithms.

\section{Model}\label{sec:model}

\paragraph{Prophet Secretary.} Consider the problem where a decision-maker faces a sequence of $n$ independent rewards $V_1,\ldots,V_n$ that arrive online in a random order and needs to select one of them in an online fashion (not selecting a value is equivalent to receiving a reward of $0$). Each reward $V_i$ is distributed according to some known distribution $F_i\in \Delta$ where $\Delta$ is the set of all distributions over non-negative reals. We denote by $\Delta^*$ the set of all finite product distributions of non-negative reals, i.e.,  $ \Delta^*= \bigcup_{n \in \naturals}\Delta^n$.
The distributions $F_1,\ldots,F_n$ are independent, and we denote by $F$ their product distribution. 
The rewards arrive in a random order $\sigma \sim \permutations{n}$, where $\permutations{n}$ is the uniform distribution over permutations of size $n$. At each time $i=1,\ldots,n$, the algorithm observes $\sigma(i)$, and $V_{\sigma(i)}$, and needs to decide immediately and irrevocably whether to accept it or discard it. If the algorithm selects at time $i$, the process terminates, and the algorithm's utility is $V_{\sigma(i)}$. 
Traditionally, the performance of the decision-maker is measured using the competitive-ratio benchmark, which is the ratio between the expected value chosen by the decision-maker (who makes decisions online) and the expected value of the maximum in hindsight (also referred to as the prophet as he knows the realizations of future rewards, thus always selecting the maximum). We denote an online algorithm for the decision-maker by $\ALG$, and the (possibly random) value chosen by $\ALG$ on a realized sequence defined by $V_1,\ldots,V_n,\sigma$ by $\ALG(V_1,\ldots,V_n,\sigma)$.
Thus, the competitive-ratio of an algorithm $\ALG$ is $$ \rho(\ALG) = \inf_{ F \in  \Delta^*} \frac{\E[\ALG(V_1,\ldots,V_n,\sigma)]}{\E[\max_{i \in [n]} V_i]},$$
where the expectation of the numerator is over the randomness of the rewards $V_1,\ldots,V_n$, over the arrival order $\sigma$, and might also depend on the randomness of the algorithm if such exists.

\paragraph{Continuous-Time Model.} Throughout the paper, we consider an equivalent model of the prophet secretary problem, known as the continuous-time model.
In this model, instead of drawing a random permutation $\sigma \sim \permutations{n}$, each random variable $V_i$ is associated with an arrival time $t_i$, where the $t_i$'s are independent, and are distributed uniformly over the interval $[0,1]$.
The algorithm observes the variables according to their arrival times, and upon an arrival, the algorithm knows the time $t$ (which is equal to some $t_i$), 
and observes the value $V_i$, and its identity $i$. 
Any algorithm for the continuous-time model can be transformed into a randomized algorithm for the discrete-time model by drawing a priori $n$ times $t_1',\ldots,t_n'$  and ordering them, then in the online process when $V_{\sigma(i)}$ arrives, treat it as if  $V_{\sigma(i)}$ arrives at time that is equal to the ${\sigma(i)}$-th smallest time in the multi-set $\{t_1',\ldots,t_n'\}$. 
For the other direction, an algorithm for the discrete-time model can be transformed into an algorithm for the continuous-time model by ignoring the time arrival, and only use as the discrete-time the number of rewards arriving so far. 
It is straightforward to verify, that the output of the algorithms defined by these reductions is distributed the same as the outputs of the original algorithms, thus leading to the same guarantees (since the distribution of the benchmark, i.e., the value of the prophet, is not influenced by the order of arrivals, or the arrival times).

\paragraph{Competition Complexity.}  In this paper, we study the expected performance of an online algorithm that observes more rewards compared to the maximum in hindsight (without the added rewards).
In particular, we analyze algorithms that observe $k$ independent copies of each reward in the original instance. 
Our research question is how many copies are needed to achieve a certain approximation of the prophet's value in the original instance. 
Formally, we consider algorithms that observe a sequence of $N=n \cdot k$ independent random rewards $V_{i,j}$ for $i\in[n]$, $j\in [k]$. Each such variable $V_{i,j}$ is distributed independently from $F_i$, and is associated with an arrival time $t_{i,j}$ distributed uniformly from the interval $[0,1]$. The algorithm $\ALG$ observes the $N$ rewards according to the continuous-time model, and needs to choose one of them in an online fashion.
The goal is to design an algorithm with as few as possible copies that approximates the expected maximum value of the original instance.
We denote by $\opt$ the distribution of the maximum value of the original instance (without the additional copies), i.e., $\opt= \max_{i\in [n]}V_i$.
We denote by $\vec{V}$ the vector of random variables $(V_{1,1},\ldots,V_{n,k})$, and by $\vec{t}$ the corresponding vector of random arrival times $(t_{1,1},\ldots,t_{n,k})$.
Given $\vec{V}$, $\vec{t}$ and some online algorithm $\ALG$, we denote by $\ALG(\vec{V},\vec{t})$ the (possibly random) output of $\ALG$, when running on online input defined by $(\vec{V},\vec{t})$. 
We use the following definition to measure the competition complexity of different classes of algorithms:
\begin{definition}[Competition Complexity]\label{def:competition} Given a class of algorithms $\class$, and $\epsilon \in [0,\nicefrac{1}{e}]$, the $(1-\epsilon)$-competition complexity of class of algorithms $\class$ is the smallest positive integer $k$ such that for every $F\in \Delta^*$, there exists an algorithm $\ALG\in \class$ such that
\[
\E[\ALG(\vec{V},\vec{t})] \geq (1-\epsilon) \cdot \E[\opt].
\]
\end{definition}
We note that we define the competition complexity only for $\epsilon\leq \nicefrac{1}{e}$ since a $(1-\nicefrac{1}{e})$-approximation can be achieved even when using a simple single-threshold algorithm without additional copies \cite{ehsani2018prophet}.
Moreover, the case of $\epsilon = 0$ is also referred to as \emph{exact} competition complexity; it was shown in 
\cite[Theorem 1]{brustle2023competition} that the exact competition complexity is unbounded even for the i.i.d.~case (namely, when $F_1=\ldots=F_n $). So, naturally, our focus will be on the case where $\epsilon > 0$.

\paragraph{Classes of Algorithms.} In this paper we consider the following classes of algorithms (in increasing levels of generality):
\begin{itemize}
    \item \textbf{Single-threshold algorithms:} Given a threshold $\tau\in \reals_{\geq 0 }$, $\ALG_{\tau}$ is the algorithm that returns the first value exceeding $\tau$.
        \item \textbf{Time-based threshold algorithms:} Given a threshold function that maps times to thresholds  $\tau:[0,1] \rightarrow \reals_{\geq 0 }$, $\ALG_{\tau}$ is the algorithm that returns the first value satisfying $V_{i,j} > \tau(t_{i,j})$.
        \item \textbf{Activation-based  algorithms:} Such an algorithm is defined by activation functions $g_{i,j}^v:[0,1]\rightarrow [0,1] $ parameterized by $i\in [n]$, $j\in[k]$ and $v$ in the support of reward $V_{i,j}$, where $g_{i,j}^v$ maps a time $t\in [0,1]$ to a probability of activating reward $V_{i,j}$ when it realizes to have a value of $v$ and arrives at time $t$. The algorithm returns the first reward that is activated.
        \item \textbf{General algorithms:} Algorithms that can base their decisions on all available information, including, most importantly, the set of rewards that arrived so far.
\end{itemize}
We note that single-threshold algorithms behave exactly the same whether they are in the discrete-time model, or in the continuous-time model. Regarding time-based threshold and activation-based algorithms, there is an analog for them in the discrete-time model which uses functions, with a discrete domain of time of $[N]$ instead of having a continuous domain of $[0,1]$. Our tight analysis for the continuous-time model for these classes can be transformed into tight analysis for these classes in the discrete-time model (see Remark~\ref{rem:adapt}).

We note that discrete time-based threshold algorithms are optimal for the prophet inequality setting under known arrival order, and all of the algorithms analyzed so far for obtaining competitive-ratios for the prophet secretary problem \citep{esfandiari2017prophet,ehsani2018prophet,azar2018prophet,correa2021prophet,arnosti2023tight} are (continuous or discrete) time-based threshold algorithms with the exceptions of \citep{correa2017posted} that analyzed the class of non-adaptive personalized thresholds (which is a special case of activation-based  algorithms) and \citep{chen2024prophet} that used an activation-based algorithm. The class of non-adaptive personalized threshold algorithms assigns before the arrival of the rewards a non-adaptive different threshold for each reward. Then it selects the first reward exceeding its threshold. This is a special case of activation-based algorithms, that use $g_{i,j}^v$ that are independent of the time, are always $0$ or $1$, and are $1$ for values of $v$ above some  threshold $\tau_{i,j}$ (and $0$ for values of $v$ below $\tau_{i,j}$).
The class of time-based threshold algorithms is a subclass of the class of activation-based algorithms where all $g_{i,j}^v(t)= \indic{v>\tau(t)}$ for every $i,j,v,t$.

\citet{correa2021prophet}  introduced a subclass of time-based threshold algorithms called \textit{blind  algorithms}.
Each such sub-class is parametrized by a quantile function $\alpha:[0,1]\rightarrow [0,1)$. Then, for every instance defined by $F_1,\ldots,F_n$, it uses the time-based threshold algorithm with threshold function $\tau:[0,1]\rightarrow\reals_{\geq 0}$ defined as $$\tau(t) =\max\{x \mid \Pr[\opt\leq x] \leq \alpha(t)\}.$$ 
A single-threshold algorithm is said to be blind if it uses a threshold of a specific quantile from the distribution of the maximum (independent from the instance). A single-threshold blind algorithm is equivalent to using a constant function $\alpha(t)=c$ for all $t\in[0,1]$ for some constant $c\in[0,1)$.

\paragraph{Simplifying Assumptions and Notations.}
Throughout the paper, we assume for devising our positive results that all of the distributions are continuous. 
This assumption can be made without loss by allowing the algorithms to use randomization to break ties using standard techniques (see, e.g., \citep{ehsani2018prophet}).
This assumption implies that in the definition of the time-based threshold strategy $\tau(t)$ that corresponds to a blind algorithm with a quantile function $\alpha:[0,1]\rightarrow [0,1)$, it holds that $\Pr[\opt\leq \tau(t)] = \alpha(t)$ instead of only $\Pr[\opt\leq \tau(t)] \leq  \alpha(t)$. 
Our constructions for hardness results use point-masses, however, this is not a key property of our constructions, and the examples can be adapted to the continuous case. 

We abuse notation and use $F_i$ to denote both the distribution and its CDF. For two distributions $D_1,D_2$, we denote by $D_1 \cdot D_2$, the distribution of the maximum between two independent samples from $D_1$, and $D_2$ (which has a CDF of the product of the CDF's of $D_1$, and $D_2$).

We use $\sqrt[n]{D}$ as the distribution with a CDF equal to the $n$-th root of the CDF of $D$.

Given some algorithm $\ALG$, we denote by $T$ the stopping time of $\ALG$, where if the algorithm does not select a value, then $T=1$.

Throughout our analysis, we will assume for simplicity that all the arrival times $t_{1,1},\ldots,t_{n,k}$ are different and are strictly smaller than $1$, which happens with high probability, and therefore does not influence the results presented in this paper.

Lastly, when it is clear from context, we omit the input of the algorithm and write $\ALG$ instead of $\ALG(\vec{V},\vec{t})$.

\paragraph{Probability Tools.}
In this paper, we use the following variants of well-known theorems in probability:

\begin{theorem}[Stirling Approximation] For every $n$ it holds that
$$\sqrt{2\pi n} \left(\frac{n}{e}\right)^n \leq n! \leq  e \sqrt{ n} \left(\frac{n}{e}\right)^n .$$
\end{theorem}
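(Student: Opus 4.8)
The final statement is the two-sided Stirling bound, and the plan is to control the sequence $b_n \eqdef n!/\bigl(\sqrt{n}\,(n/e)^n\bigr)$ directly. I would show (i) that $(b_n)_{n\ge 1}$ is non-increasing, and (ii) that $b_n \to \sqrt{2\pi}$. Granting these, the upper bound is immediate since $b_1 = e$, so $b_n \le e$ for all $n$, i.e.\ $n!\le e\sqrt n\,(n/e)^n$; and the lower bound follows because a non-increasing sequence with limit $\sqrt{2\pi}$ stays $\ge \sqrt{2\pi}$, i.e.\ $n! \ge \sqrt{2\pi n}\,(n/e)^n$.

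For (i), I would compute $\ln b_n - \ln b_{n+1} = \bigl(n+\tfrac12\bigr)\ln(1+\tfrac1n) - 1$ by writing $\ln b_n = \ln n! - (n+\tfrac12)\ln n + n$ and telescoping. It then remains to check $(n+\tfrac12)\ln(1+\tfrac1n)\ge 1$, which I would get from the elementary inequality $\ln(1+x)\ge \tfrac{2x}{2+x}$ for $x\ge 0$ — proved by noting that $h(x)=\ln(1+x)-\tfrac{2x}{2+x}$ satisfies $h(0)=0$ and $h'(x)=\tfrac{x^2}{(1+x)(2+x)^2}\ge 0$ — applied at $x=1/n$, which gives $(n+\tfrac12)\cdot\tfrac{2/n}{2+1/n}=1$. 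This simultaneously shows $(b_n)$ is bounded in $(0,e]$, hence convergent to some $L\in(0,e]$, so the remaining content of the theorem is exactly the identification $L=\sqrt{2\pi}$.

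To pin down $L$ — which I expect to be the only non-routine step — I would feed the sequence into the central binomial coefficient. Substituting $n! = b_n\sqrt n (n/e)^n$ and $(2n)! = b_{2n}\sqrt{2n}(2n/e)^{2n}$ gives $\binom{2n}{n} = \frac{b_{2n}}{b_n^2}\cdot\frac{\sqrt 2\,4^n}{\sqrt n}$, hence $\binom{2n}{n}\,\frac{\sqrt n}{4^n}\to \sqrt 2/L$. Independently, the Wallis product $\tfrac\pi2 = \lim_n \tfrac{1}{2n+1}\bigl(\tfrac{(2n)!!}{(2n-1)!!}\bigr)^2$ together with the identity $\tfrac{(2n)!!}{(2n-1)!!} = \tfrac{4^n}{\binom{2n}{n}}$ yields $\binom{2n}{n}\,\frac{\sqrt n}{4^n}\to 1/\sqrt\pi$. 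Equating the two limits gives $\sqrt 2/L = 1/\sqrt\pi$, so $L = \sqrt{2\pi}$, which completes the argument. The Wallis product is itself standard (e.g.\ via the recursion for $\int_0^{\pi/2}\sin^{2n}x\,dx$), so it introduces no real difficulty; everything else reduces to the single monotonicity estimate above.
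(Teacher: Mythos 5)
The paper states this Stirling bound as a standard probability tool and gives no proof, so there is no paper argument to compare against; your proposal is a from-scratch derivation. The overall strategy --- track $b_n = n!/\bigl(\sqrt{n}\,(n/e)^n\bigr)$, prove monotonicity via $\ln b_n - \ln b_{n+1} = (n+\tfrac12)\ln(1+\tfrac1n)-1 \ge 0$, then identify the limit through the central binomial coefficient and the Wallis product --- is a standard elementary route, and every computation you exhibit checks out.

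There is, however, one genuine gap. You write that the monotonicity estimate ``simultaneously shows $(b_n)$ is bounded in $(0,e]$, hence convergent to some $L\in(0,e]$.'' A non-increasing sequence of positive reals bounded above certainly converges, but its limit may well be $0$ (e.g.\ $e/n$), and nothing you have shown so far excludes $L=0$. The Wallis step then silently divides by $L$: the identity $\binom{2n}{n}\tfrac{\sqrt n}{4^n}\to \tfrac{\sqrt 2}{L}$ is only the correct limit of $\tfrac{\sqrt2\,b_{2n}}{b_n^2}$ when $L>0$; if $L=0$ the ratio $b_{2n}/b_n^2$ need not behave like $1/L$, and the claimed equation $\sqrt2/L = 1/\sqrt\pi$ is not meaningful. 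To close the gap you need the complementary upper bound on the decrement, showing the telescoping series converges. The cleanest companion to your inequality is $\ln(1+x)\le \frac{x(2+x)}{2(1+x)}$ for $x\ge 0$, proved exactly like your lower bound: the difference vanishes at $0$ and its derivative is $\frac{x^2}{2(1+x)^2}\ge 0$. Substituting $x=1/n$ yields $\ln b_n-\ln b_{n+1}\le \frac{(2n+1)^2}{4n(n+1)}-1=\frac{1}{4n(n+1)}$, which telescopes to $\ln b_1 - \lim_n\ln b_n\le \tfrac14$, hence $L\ge e^{3/4}>0$. With $L>0$ established, the rest of your argument --- in particular the identification $L=\sqrt{2\pi}$ --- goes through as written.
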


\begin{theorem}[Hoeffding's Inequality]
Let $X_1,\ldots,X_n$ be independent random variables where $X_i\in [a_i,b_i]$ almost surely, and $\E[\sum_{i\in [n]} X_i] = \mu$. Then for every $t>0$ it holds that 
$$ \Pr\left[\sum_{i} X_i \geq t +\mu\right] \leq exp\left(-\frac{2t^2}{\sum_{i\in[n]} (b_i-a_i)^2}\right).$$
\end{theorem}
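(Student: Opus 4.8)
The plan is to use the classical Chernoff--Cram\'er exponential-moment method: exponentiate, factorize by independence, bound each factor via Hoeffding's Lemma, and optimize the free parameter. Concretely, for any $s>0$, Markov's inequality applied to the nonnegative random variable $e^{s(\sum_i X_i-\mu)}$ gives
$$\Pr\left[\sum_i X_i\geq t+\mu\right]\leq e^{-st}\,\E\!\left[e^{s(\sum_i X_i-\mu)}\right]=e^{-st}\prod_{i\in[n]}\E\!\left[e^{s(X_i-\E X_i)}\right],$$
where the last equality uses independence of the $X_i$.

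The main step I would isolate as an auxiliary lemma (Hoeffding's Lemma) is the inequality $\E[e^{sY}]\leq e^{s^2(b-a)^2/8}$ for any mean-zero random variable $Y$ supported on $[a,b]$. I would prove it using convexity of $y\mapsto e^{sy}$, which lets one dominate $e^{sy}$ on $[a,b]$ by the chord joining the endpoints $(a,e^{sa})$ and $(b,e^{sb})$; taking expectations and using $\E[Y]=0$ reduces the claim to showing $\phi(u)\leq u^2/8$, where $\phi(u):=-pu+\ln(1-p+pe^{u})$, $p:=-a/(b-a)$, and $u:=s(b-a)$. This in turn follows from $\phi(0)=\phi'(0)=0$ together with the uniform bound $\phi''(u)=q(1-q)\leq 1/4$, where $q:=pe^{u}/(1-p+pe^{u})\in[0,1]$, via a second-order Taylor expansion with Lagrange remainder.

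Applying the lemma to $Y_i=X_i-\E X_i$, which is supported on an interval of length $b_i-a_i$, yields $\E[e^{s(X_i-\E X_i)}]\leq e^{s^2(b_i-a_i)^2/8}$, hence
$$\Pr\left[\sum_i X_i\geq t+\mu\right]\leq\exp\!\left(-st+\tfrac{s^2}{8}\sum_{i\in[n]}(b_i-a_i)^2\right).$$
The last step is to choose $s$ minimizing the exponent: taking $s=4t/\sum_{i\in[n]}(b_i-a_i)^2$ makes the exponent equal to $-2t^2/\sum_{i\in[n]}(b_i-a_i)^2$, which is exactly the claimed bound. I expect the only mildly technical point to be the proof of Hoeffding's Lemma itself --- verifying $\phi''\leq 1/4$ and assembling the Taylor estimate --- while all other steps are routine manipulations.
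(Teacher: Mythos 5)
The paper merely states Hoeffding's Inequality as a background probability tool (alongside Stirling's approximation) and does not prove it. Your proof is the standard Chernoff--Cram\'er argument via Hoeffding's Lemma, and every step checks out: the Markov/exponentiation step, the factorization by independence, the chord bound and the reduction to $\phi(u)\leq u^2/8$ via $\phi(0)=\phi'(0)=0$ and $\phi''=q(1-q)\leq 1/4$, and the optimization $s=4t/\sum_{i}(b_i-a_i)^2$ yielding the exponent $-2t^2/\sum_i (b_i-a_i)^2$; so the proposal is correct, with nothing in the paper to compare it against.
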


\subsection{Preliminaries }\label{sec:prelim}
In this section, we present known facts about time-based threshold algorithms, which variants of them were proven in \citep{correa2021prophet}. In Appendix~\ref{app:prelim}, we prove them for completeness, and since we use slightly different versions than the ones proven in \citep{correa2021prophet}.

Given $F_1,\ldots,F_n$, and $t\in[0,1]$, and a non-increasing function $\tau: [0,1]  \rightarrow R$, let $p_\tau(t,F_1,\ldots,F_n)$ be the probability that the algorithm defined by $\tau$  stops before time $t$.
I.e., for a single distribution it holds that $$ p_{\tau}(t,F_1)  = 
\Pr[t_1< t \wedge  V_1 > \tau(t_1) ] = 
\E[\indic{t_1<t}(1-F_1(\tau(t_1))],$$
and for multiple distributions, it holds that 
$$  p_{\tau}(t,F_1,\ldots,F_n)  = 1- \prod_{i=1}^n (1- p_{\tau}(t,F_i)).$$

\begin{restatable}{lem}{lemff}
\label{lem:f1f23}
For every threshold function $\tau:[0,1]\rightarrow \reals_{\geq 0}$,  and distributions $F_1,F_2$, it holds that  
    $$p_{\tau}(t,F_1\cdot F_2) \leq p_{\tau}(t,F_1,F_2).$$ 
    \end{restatable}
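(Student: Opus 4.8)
The plan is to prove the inequality by a direct comparison of the two sides, fixing the threshold function $\tau$ and the time $t$. Write $a_1 = p_\tau(t,F_1)$ and $a_2 = p_\tau(t,F_2)$, so that the right-hand side is $p_\tau(t,F_1,F_2) = 1 - (1-a_1)(1-a_2) = a_1 + a_2 - a_1 a_2$. The left-hand side $p_\tau(t, F_1 \cdot F_2)$ is the probability that a single reward distributed as $\max\{V_1, V_2\}$ (with $V_1 \sim F_1$, $V_2 \sim F_2$ independent), arriving at a uniform time, is accepted before time $t$; that is, $p_\tau(t, F_1 \cdot F_2) = \E\bigl[\indic{s < t}\,(1 - F_1(\tau(s)) F_2(\tau(s)))\bigr]$ where $s$ is uniform on $[0,1]$. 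The goal is thus to show
\[
\E\bigl[\indic{s<t}\,(1 - F_1(\tau(s))F_2(\tau(s)))\bigr] \;\le\; a_1 + a_2 - a_1 a_2.
\]

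First I would rewrite everything pointwise in the arrival time. For a fixed $s$, let $x_i = 1 - F_i(\tau(s)) \in [0,1]$ be the conditional probability that reward $i$ exceeds the threshold at time $s$. The merged reward exceeds the threshold at time $s$ with probability $1 - (1-x_1)(1-x_2) = x_1 + x_2 - x_1 x_2$. So the left-hand side equals $\E[\indic{s<t}(x_1(s) + x_2(s) - x_1(s)x_2(s))]$, while by definition $a_i = \E[\indic{s<t} x_i(s)]$. Hence the inequality we want is exactly
\[
\E\bigl[\indic{s<t}\, x_1(s) x_2(s)\bigr] \;\ge\; a_1 a_2 \;=\; \E\bigl[\indic{s<t}\,x_1(s)\bigr]\cdot\E\bigl[\indic{s<t}\,x_2(s)\bigr].
\]
This is where the structure does the work: both $x_1(\cdot)$ and $x_2(\cdot)$ are non-increasing functions of $s$ (since $\tau$ is non-increasing, $\tau(s)$ decreases in $s$, so $1 - F_i(\tau(s))$ increases... wait — I should be careful here and recheck the monotonicity direction). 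Let me state it cleanly: since $\tau$ is non-increasing, $s \mapsto \tau(s)$ is non-increasing, so $s \mapsto F_i(\tau(s))$ is non-increasing, hence $s \mapsto x_i(s) = 1 - F_i(\tau(s))$ is non-decreasing. Thus $x_1$ and $x_2$ are both non-decreasing (equivalently, $\indic{s<t} x_i(s)$ need not be monotone, but $x_1, x_2$ are positively correlated). The cleanest route is then to invoke Chebyshev's sum inequality / the FKG (correlation) inequality for monotone functions of a single real variable: if $g, h$ are both non-decreasing on $[0,1]$ and we condition on the event $\{s < t\}$ (which is just restricting the uniform law to $[0, t]$, still a product-free one-dimensional law), then $\E[g(s) h(s) \mid s < t] \ge \E[g(s)\mid s<t]\,\E[h(s)\mid s<t]$. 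Multiplying through by $\Pr[s<t]^2 = t^2$ and noting $a_i = t \cdot \E[x_i(s)\mid s<t]$ gives exactly the displayed inequality, and hence the lemma.

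The main obstacle — really the only subtlety — is making sure the correlation inequality is applied correctly: I need $x_1$ and $x_2$ to be monotone in the *same* direction (they are, both non-decreasing), and I need the measure on the arrival time to be one-dimensional so that the FKG / Chebyshev correlation lemma applies (it does: conditioned on $\{s < t\}$, the time is uniform on $[0,t]$). A self-contained alternative that avoids quoting FKG is to write the difference as a double integral: $\E[g h] - \E[g]\E[h] = \tfrac12 \E_{s,s'}[(g(s)-g(s'))(h(s)-h(s'))]$ over two i.i.d. copies $s, s'$, and observe the integrand is always $\ge 0$ because $g$ and $h$ are monotone in the same direction. I would include this two-line argument rather than appeal to a named theorem, since it keeps the proof elementary and matches the style of the preliminaries section.
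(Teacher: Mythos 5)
Your proof is correct, but it takes a genuinely different route from the paper's. You reduce the claim to the covariance inequality $\E[\indic{s<t}\,x_1(s)x_2(s)] \geq \E[\indic{s<t}\,x_1(s)]\cdot\E[\indic{s<t}\,x_2(s)]$, where $x_i(s)=1-F_i(\tau(s))$, and prove it by noting that $x_1,x_2$ are both non-decreasing in $s$ (since $\tau$ is non-increasing) and invoking a Chebyshev/Harris/FKG-type correlation bound. The paper instead uses a short coupling argument: sample $(V_1,t_1),(V_2,t_2)$, set $i=\arg\max_j V_j$, observe that $(V_i,t_i)$ is distributed as a draw from $F_1\cdot F_2$ with an independent uniform arrival time, and note that the acceptance event $\{t_i<t\wedge V_i>\tau(t_i)\}$ is contained in the union of the two individual acceptance events, whose probability (by independence) is exactly $p_\tau(t,F_1,F_2)$. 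The coupling route is shorter and, notably, never uses monotonicity of $\tau$; your argument leans on $\tau$ being non-increasing, which the paper's definition of $p_\tau$ does assume (and Observation~\ref{obs:monotone} says is without loss), so you are fine, but it is a dependency the paper's proof avoids entirely.

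Two small fixes to your write-up. First, multiplying the conditional FKG inequality by $t^2$ gives $t^2\,\E[x_1x_2\mid s<t]\geq a_1a_2$, but the left-hand side equals $t\cdot\E[\indic{s<t}x_1x_2]$, not $\E[\indic{s<t}x_1x_2]$, so you still need the trivial extra step that $\E[\indic{s<t}x_1x_2]\geq t\cdot\E[\indic{s<t}x_1x_2]$ because $t\leq 1$; multiplying by $t$ rather than $t^2$ gives $\E[\indic{s<t}x_1x_2]\geq a_1a_2/t\geq a_1a_2$ more cleanly. Second, the self-contained double-integral identity you mention at the end is actually the tidiest route and needs no conditioning at all: with $f_i(s)=\indic{s<t}x_i(s)$, the product $(f_1(s)-f_1(s'))(f_2(s)-f_2(s'))$ is pointwise non-negative even though the $f_i$ themselves are not monotone (check the three cases according to whether each of $s,s'$ lies below or above $t$), so $\E[f_1f_2]\geq\E[f_1]\E[f_2]$ follows in one line, which is exactly the inequality you need.
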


\begin{restatable}{lem}{lemfff} \label{lem:f1f2}
    For every threshold function $\tau:[0,1]\rightarrow \reals_{\geq 0}$, and distributions $F_1,F_2$,  it holds  that  \begin{equation}    
    p_{\tau}(t,F_1,F_2) \leq p_{\tau}(t,\sqrt{F_1\cdot F_2},\sqrt{F_1\cdot F_2}) .  \label{eq:f1f2}
    \end{equation}
\end{restatable}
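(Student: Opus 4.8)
The plan is to pass to the integral representation of the stopping probability and then split the resulting inequality into two classical estimates. First I would condition on the (uniform) arrival time to write $p_\tau(t,F_i) = \int_0^t \bigl(1 - F_i(\tau(s))\bigr)\,ds$, and similarly $p_\tau(t,\sqrt{F_1F_2}) = \int_0^t \bigl(1 - \sqrt{F_1(\tau(s))F_2(\tau(s))}\bigr)\,ds$ (note $\sqrt{F_1F_2}$ is a legitimate CDF, so this is well defined). Setting $a_i := p_\tau(t,F_i)$ and $b := p_\tau(t,\sqrt{F_1F_2})$, and using $p_\tau(t,G_1,G_2) = 1-(1-p_\tau(t,G_1))(1-p_\tau(t,G_2))$, the claimed inequality $p_\tau(t,F_1,F_2) \le p_\tau(t,\sqrt{F_1F_2},\sqrt{F_1F_2})$ is equivalent to
$$ (1-b)^2 \;\le\; (1-a_1)(1-a_2). $$

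Next I would make the complementary mass explicit. Write $c := 1-t \ge 0$ and $g_i(s) := F_i(\tau(s)) \in [0,1]$, so that $1-a_i = c + \int_0^t g_i(s)\,ds$ and $1-b = c + \int_0^t \sqrt{g_1(s)g_2(s)}\,ds$. Expanding both sides of the target inequality and cancelling the common $c^2$, it remains to show
$$ 2c\int_0^t \sqrt{g_1 g_2} \;+\; \Bigl(\int_0^t \sqrt{g_1 g_2}\Bigr)^2 \;\le\; c\int_0^t (g_1+g_2) \;+\; \Bigl(\int_0^t g_1\Bigr)\Bigl(\int_0^t g_2\Bigr). $$

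Finally I would bound the two pieces separately: $2c\int_0^t \sqrt{g_1 g_2} \le c\int_0^t(g_1+g_2)$ follows from the pointwise AM–GM inequality $\sqrt{g_1 g_2}\le \tfrac12(g_1+g_2)$ together with $c\ge 0$, while $\bigl(\int_0^t \sqrt{g_1 g_2}\bigr)^2 \le \int_0^t g_1 \cdot \int_0^t g_2$ is precisely the Cauchy–Schwarz inequality applied to $\sqrt{g_1}$ and $\sqrt{g_2}$ on $[0,t]$. Summing the two inequalities gives the displayed bound, and hence the lemma. There is no genuinely hard step here; the only point requiring a little care is the bookkeeping of the $(1-t)$ terms, after which the estimate separates cleanly into an AM–GM contribution (the cross term with $c$) and a Cauchy–Schwarz contribution (the product of the two integrals).
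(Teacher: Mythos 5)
Your proof is correct. It is also recognizably the same family of argument as the paper's, but organized differently enough that the comparison is worth making explicit. The paper conditions on the unordered pair of arrival times $\{t_1,t_2\}=\{r,s\}$ and proves the inequality pointwise for each realization, splitting into three cases: both times after $t$ (trivial), one time before $t$ (a scalar AM--GM, encoded as $(\sqrt{F_1(\tau(r))}-\sqrt{F_2(\tau(r))})^2\ge 0$), and both times before $t$ (a two-point Cauchy--Schwarz, encoded as $(\sqrt{F_1(\tau(r))F_2(\tau(s))}-\sqrt{F_1(\tau(s))F_2(\tau(r))})^2\ge 0$). You instead average over the arrival times first, reduce the claim to $(1-b)^2\le(1-a_1)(1-a_2)$, and separate the cross term (controlled by pointwise AM--GM, absorbing $c=1-t$) from the product term (controlled by the integral Cauchy--Schwarz inequality on $[0,t]$). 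These are the same two inequalities, but your version avoids the case analysis entirely: the $c$-terms automatically account for the mass of times falling after $t$, and integrating Cauchy--Schwarz once replaces the paper's pairwise argument. The paper's approach yields a slightly stronger fact (the dominance holds conditionally on every realization of the unordered times), which is not needed for the lemma; yours is a cleaner derivation of the statement as stated. One small thing worth keeping in mind: your reduction uses $\sqrt{F_1 F_2}(\tau(s)) = \sqrt{g_1(s)g_2(s)}$, which is fine because the CDF of $\sqrt{F_1 F_2}$ is by definition $\sqrt{F_1(\cdot)F_2(\cdot)}$; you flag this, which is good, since it is the only place the specific structure of the ``averaged'' distribution enters.
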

By applying Lemma~\ref{lem:f1f2} infinitely many times, we get the following corollary:
\begin{corollary}\label{cor:f1fn}
    For every threshold function $\tau:[0,1]\rightarrow \reals_{\geq 0}$, and distributions $F_1,\ldots,F_n$,  it holds  that  
    \begin{eqnarray}    
    p_{\tau}(t,F_1,\ldots,F_n) \leq p_{\tau}(t,\underbrace{\sqrt[n]{F_1\cdot\ldots\cdot F_n},\ldots,\sqrt[n]{F_1\cdot\ldots\cdot F_n}}_{n \text{ times}}) . \nonumber 
    \end{eqnarray}
\end{corollary}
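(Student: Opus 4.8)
The plan is to prove Corollary~\ref{cor:f1fn} by iterating Lemma~\ref{lem:f1f2}. The statement to establish is that for any non-increasing threshold function $\tau$ and any distributions $F_1,\dots,F_n$,
\[
p_{\tau}(t,F_1,\dots,F_n) \leq p_{\tau}(t,\underbrace{\sqrt[n]{G},\dots,\sqrt[n]{G}}_{n}),
\]
where I write $G \eqdef F_1\cdot\dots\cdot F_n$ for brevity. The natural approach is an inductive ``smoothing'' argument: repeatedly pick two coordinates whose distributions are not yet equal and replace both of them by the square root of their product. By Lemma~\ref{lem:f1f2}, each such replacement weakly increases $p_\tau(t,\cdot)$, and crucially it preserves the product of all $n$ CDFs (hence keeps $G$, and thus the target endpoint, fixed). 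So if the replacement process converges to the all-equal tuple $(\sqrt[n]{G},\dots,\sqrt[n]{G})$, the chain of inequalities gives the claim. The phrase ``applying Lemma~\ref{lem:f1f2} infinitely many times'' in the excerpt signals exactly this: a limiting argument rather than a finite one.

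Concretely, I would set up an averaging scheme on the log-CDFs. Writing $\ell_i = \ln F_i$ (a non-positive function, or $-\infty$ where $F_i=0$), a single application of Lemma~\ref{lem:f1f2} to coordinates $a,b$ replaces $\ell_a,\ell_b$ each by $\tfrac12(\ell_a+\ell_b)$, i.e. it is a pairwise averaging step, and the sum $\sum_i \ell_i = \ln G$ is invariant. Choosing the pair to be, say, the current argmax and argmin (or cycling through all pairs round-robin), a standard fact about such pairwise-averaging dynamics is that all coordinates converge uniformly to the common mean $\tfrac1n\sum_i \ell_i = \tfrac1n\ln G = \ln \sqrt[n]{G}$. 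Hence after $m$ steps we have a tuple $(F_1^{(m)},\dots,F_n^{(m)})$ with $\prod_i F_i^{(m)} = G$, with $p_\tau(t,F_1,\dots,F_n)\le p_\tau(t,F_1^{(m)},\dots,F_n^{(m)})$ by transitivity through the $m$ applications of Lemma~\ref{lem:f1f2}, and with $F_i^{(m)}\to \sqrt[n]{G}$ pointwise as $m\to\infty$. The final step is to pass to the limit on the right-hand side: since $p_\tau(t,\cdot)$ is continuous with respect to pointwise convergence of the CDFs (it is built from $1-\prod_i(1-p_\tau(t,F_i))$ where $p_\tau(t,F_i)=\E[\indic{t_i<t}(1-F_i(\tau(t_i)))]$ depends continuously — in fact monotonically and boundedly — on $F_i$, so dominated convergence applies), we get $p_\tau(t,F_1^{(m)},\dots,F_n^{(m)}) \to p_\tau(t,\sqrt[n]{G},\dots,\sqrt[n]{G})$, completing the proof.

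The main obstacle is making the ``infinitely many times'' precise: one must exhibit a concrete sequence of pair choices whose iterated averages converge to the uniform mean, and one must justify the limit interchange. Neither is deep — pairwise averaging toward the mean is classical, and the continuity/monotonicity of $p_\tau(t,\cdot)$ in each CDF argument follows directly from its explicit formula via monotone or dominated convergence — but both need to be stated rather than waved at. A cleaner alternative that sidesteps the limiting argument entirely, when $n$ is a power of $2$: pair up $(F_1,F_2),(F_3,F_4),\dots$ and apply Lemma~\ref{lem:f1f2} to each pair, obtaining after one ``round'' the tuple $(\sqrt{F_1F_2},\sqrt{F_1F_2},\sqrt{F_3F_4},\sqrt{F_3F_4},\dots)$; then pair across blocks and repeat, so that after $\log_2 n$ rounds every coordinate equals $\sqrt[n]{G}$ exactly. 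For general $n$ one can first pad the instance with copies of a trivial distribution (point mass at $0$) up to the next power of $2$ — but since padding changes $G$, it is actually cleaner to just run the log-CDF averaging limit as above. I would therefore present the general-$n$ limiting argument as the main proof, possibly remarking on the power-of-two shortcut, and flag the continuity of $p_\tau(t,\cdot)$ under pointwise convergence of CDFs as the one technical point worth a sentence of justification.
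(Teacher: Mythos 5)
Your proposal correctly unpacks the paper's one-line justification (``applying Lemma~\ref{lem:f1f2} infinitely many times''): iterate pairwise root-of-product replacements, which preserve the overall product $F_1\cdots F_n$ and weakly increase $p_\tau$; note that the log-CDFs converge under this averaging dynamics to their common mean $\tfrac1n\ln(F_1\cdots F_n)$; and pass to the limit using continuity of $p_\tau(t,\cdot)$ under pointwise convergence of CDFs. The only point you elide is that Lemma~\ref{lem:f1f2}, as stated for two distributions, applies to two coordinates $a,b$ of an $n$-tuple with the rest held fixed — this follows immediately from the product formula $p_\tau(t,G_1,\ldots,G_n)=1-\prod_i\bigl(1-p_\tau(t,G_i)\bigr)$, since replacing the factor $(1-p_\tau(t,F_a))(1-p_\tau(t,F_b))$ by $(1-p_\tau(t,\sqrt{F_aF_b}))^2$ can only decrease the product; worth one clause, but otherwise the argument is complete and matches the paper's intent.
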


We next observe that for time $t$, the event that some reward $V_i$ arrives at time $t_i=t$ increases the probability of reaching time $t$. 
\begin{observation} \label{obs:tt}
    For every threshold function $\tau:[0,1]\rightarrow \reals_{\geq 0}$,   every instance defined by $F_1,\ldots,F_n$, for every $i\in [n]$, and $ t\in [0,1)$, it holds that $$\Pr[T \geq t_i \mid t_i = t] \geq  \Pr[T \geq t].$$
\end{observation}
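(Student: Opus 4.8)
The plan is to decompose the event $\{T \geq t\}$ according to whether an early stop is ``caused'' by reward $i$ or by one of the other rewards, and then exploit the independence of the pairs $(V_j,t_j)$. Write $T_{-i}$ for the stopping time of $\ALG_{\tau}$ when run only on the rewards $\{V_j : j\neq i\}$ with their arrival times $\{t_j : j\neq i\}$. Since $\ALG_{\tau}$ is a time-based threshold algorithm, it stops strictly before time $t$ exactly when some reward $V_j$ with $t_j < t$ satisfies $V_j > \tau(t_j)$. Pulling the index $i$ out of this disjunction gives
$$\{T \geq t\} \;=\; \{T_{-i}\geq t\}\ \cap\ \{\,\neg\,(t_i < t \ \wedge\ V_i > \tau(t_i))\,\},$$
and the two events on the right are independent, because $(V_i,t_i)$ is independent of everything else.

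For the unconditional probability I would then use this independence to write
$$\Pr[T\geq t] \;=\; \Pr[T_{-i}\geq t]\cdot \Pr\!\left[\neg\,(t_i < t \ \wedge\ V_i > \tau(t_i))\right] \;\leq\; \Pr[T_{-i}\geq t],$$
since the second factor is a probability and hence at most $1$.

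For the conditional probability, conditioning on $t_i = t$ makes the event $t_i < t$ impossible, so reward $i$ cannot trigger a stop strictly before time $t$; and $T_{-i}$ is independent of $t_i$. Therefore
$$\Pr[T \geq t_i \mid t_i = t] \;=\; \Pr[T \geq t \mid t_i = t] \;=\; \Pr[T_{-i} \geq t].$$
Combining the two displays yields $\Pr[T\geq t_i \mid t_i = t] = \Pr[T_{-i}\geq t] \geq \Pr[T\geq t]$, which is the claim.

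The only point needing care — and it is minor — is the bookkeeping at the boundary time $t$: one must note that a reward arriving at exactly time $t$ can at most make $T = t$, which still satisfies $T \geq t$, so conditioning on $t_i = t$ genuinely removes reward $i$ from the set of rewards able to cause a stop \emph{before} $t$; and that the continuity (tie-breaking) assumption makes the conditional probability well defined. Everything else is a direct consequence of the mutual independence of the $(V_j,t_j)$.
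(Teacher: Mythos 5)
Your proof is correct. The paper states Observation~\ref{obs:tt} without an explicit proof (it is not among the claims shown in Appendix~B), so there is nothing to compare against directly; but the argument you give is exactly the natural one the authors surely have in mind: conditioning on $t_i=t$ removes reward $i$ from the set of rewards that could trigger a stop strictly before time $t$, and by the mutual independence of the pairs $(V_j,t_j)$ this factorizes as $\Pr[T\geq t] = \Pr[T_{-i}\geq t]\cdot\Pr[\neg(t_i<t\wedge V_i>\tau(t_i))]$, while $\Pr[T\geq t_i\mid t_i=t]=\Pr[T_{-i}\geq t]$. The decomposition of $\{T\geq t\}$ as an intersection over rewards is valid because a time-based threshold algorithm stops strictly before $t$ precisely when some reward $j$ with $t_j<t$ has $V_j>\tau(t_j)$, and your boundary remark is the right thing to say about $t_i=t$. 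The only stylistic nitpick is that conditioning on the null event $\{t_i=t\}$ is to be read as evaluating the regular conditional probability, which is standard here since $t_i$ is uniform and independent of everything else.
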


Lastly, we observe that we can assume without loss of generality that the threshold function $\tau$ is non-increasing since if one uses the non-increasing ``sorted'' version of $\tau$, then a simple exchange argument implies that the ``sorted'' threshold algorithm leads to at least the same performance.
\begin{observation}
    \label{obs:monotone}
    For every threshold function $\tau:[0,1]\rightarrow\reals_{\geq 0}$, there exists a non-increasing threshold function $\tau':[0,1]\rightarrow\reals_{\geq 0}$, for which for every instance it holds that $\E[\ALG_{\tau'}] \geq \E[\ALG_{\tau}]$.
\end{observation}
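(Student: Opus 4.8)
The plan is to realize the ``sorted'' threshold function explicitly and then set up an exchange argument between a pair of time intervals that are ``out of order'' in $\tau$. First I would define $\tau'$ as the unique non-increasing rearrangement of $\tau$: formally, letting $\mu$ be Lebesgue measure on $[0,1]$, set $\tau'(t) = \sup\{ x : \mu(\{s : \tau(s) > x\}) > t \}$, so that $\tau$ and $\tau'$ are equimeasurable (for every threshold level $x$, the set of times where the threshold exceeds $x$ has the same measure under $\tau$ and under $\tau'$). The key point of this construction is that the \emph{multiset} of thresholds, weighted by the length of time each is used, is unchanged; only the temporal order in which they are deployed is changed, and $\tau'$ uses the higher thresholds first.

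Next I would argue it suffices to prove the claim for two threshold functions that differ by a single ``swap.'' Concretely, suppose $\tau$ is not already non-increasing; then there exist two disjoint subintervals $I_1 = [a, a+\delta]$ and $I_2 = [b, b+\delta]$ with $a + \delta \le b$ (so $I_1$ is earlier) on which $\tau$ takes constant values $\tau_1 < \tau_2$ respectively. Let $\hat\tau$ be the function obtained from $\tau$ by swapping these two blocks, i.e. $\hat\tau = \tau_2$ on $I_1$, $\hat\tau = \tau_1$ on $I_2$, and $\hat\tau = \tau$ elsewhere. I would show $\E[\ALG_{\hat\tau}] \ge \E[\ALG_{\tau}]$. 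Then, since any $\tau$ can be driven to its non-increasing rearrangement $\tau'$ by a (possibly transfinite, but for our purposes it is cleanest to think of it as a limiting) sequence of such order-improving swaps — or alternatively, one can directly compare $\tau$ and $\tau'$ by conditioning on the realized arrival times and values and coupling — the observation follows. For a clean write-up I would actually do the coupling argument directly: couple the instance run under $\tau$ with the instance run under $\tau'$ by using the same realizations $V_{i}$ and the same arrival times $t_i$, and track the first index (in arrival order) at which the two runs could diverge.

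The cleanest version of the core argument conditions on everything except which threshold is ``assigned'' to a given arrival. Fix the realized values and arrival times. For the single swap between $I_1$ and $I_2$: condition on the set of arrivals, their values, and all arrival times outside $I_1 \cup I_2$, and condition on whether the algorithm has already stopped before time $a$ (if so, nothing changes). Given it has not stopped by time $a$, the only arrivals that matter for the comparison are those landing in $I_1 \cup I_2$, together with the ``continuation value'' of the rest of the process after time $b+\delta$. Under $\hat\tau$ we apply the higher threshold $\tau_2$ during the earlier block $I_1$ and the lower threshold $\tau_1$ during the later block $I_2$; under $\tau$ it is the reverse. A direct case analysis on the arrivals in $I_1 \cup I_2$ — an arrival with value in $(\tau_1, \tau_2]$ is accepted under the low threshold and rejected under the high threshold, an arrival with value $> \tau_2$ is accepted under either — shows that $\hat\tau$ accepts a value that is at least as large, in the stochastic-dominance sense, because acquiring the \emph{option} to accept a high value earlier (and deferring the more permissive low threshold to later) can only help: any value $\hat\tau$ picks up is matched by an at-least-as-good value under $\tau$ only when $\tau$ happens to also stop in $I_1 \cup I_2$, and whenever $\tau$ stops there, so does $\hat\tau$ with a value that is at least as large.

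The main obstacle I anticipate is making the ``sequence of swaps converges to the sorted function'' step fully rigorous without hand-waving — a finite sequence of block-swaps only sorts piecewise-constant $\tau$, and general measurable $\tau$ requires either an approximation argument (approximate $\tau$ by step functions, sort those, pass to the limit using dominated convergence for $\E[\ALG_\tau]$ as a functional of $\tau$) or the direct coupling between $\tau$ and $\tau'$ described above. I would favor the direct coupling, since it sidesteps the transfinite-induction worry entirely: one shows in one shot that running with $\tau'$ stochastically dominates running with $\tau$ by exhibiting, on a common probability space, a pointwise inequality $\ALG_{\tau'}(\vec V, \vec t) \ge \ALG_{\tau}(\vec V, \vec t')$ for a suitable measure-preserving re-labeling of the arrival times. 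Everything else — the case analysis on arrivals within a block, and the observation that stopping earlier with a higher guaranteed value is never worse — is routine.
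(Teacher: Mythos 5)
Your high-level plan (non-increasing rearrangement, then reduce to a single block swap) is the intended shape of the argument, since the paper only gestures at ``a simple exchange argument.'' But the exchange step is not correct as you have written it. The culminating claim --- ``whenever $\tau$ stops there, so does $\hat\tau$ with a value that is at least as large'' --- is false. If $\tau$ stops at some $t\in I_1$ because the arriving value $v$ satisfies $\tau_1 < v \le \tau_2$, then under $\hat\tau$ the threshold on $I_1$ is $\tau_2\ge v$, so $\hat\tau$ \emph{rejects} this arrival; if nothing else acceptable lands in $I_1\cup I_2$, $\hat\tau$ strictly loses on this realization (it may end up with $0$). So there is no realization-by-realization dominance under the identity coupling. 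Your fallback, a ``measure-preserving relabeling'' of arrival times, fares no better: the natural candidate $\phi$ that translates $I_1$ onto $I_2$ and back keeps every arrival facing the same threshold as before but reverses which block is processed first, and a realization with a huge value just above $\tau_1$ in the $I_1$ block and a modest value just above $\tau_2$ in the $I_2$ block gives $\ALG_{\hat\tau}(\vec V,\phi(\vec t)) < \ALG_{\tau}(\vec V,\vec t)$. There is also a secondary issue: you take $I_1$ and $I_2$ non-adjacent, so arrivals in $(a+\delta,b)$ can change which of the two runs stops first, which further breaks the pointwise comparison; the swap should be between \emph{adjacent} equal-length blocks.

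The correct exchange argument is an averaging argument, not a pointwise one, and that is the idea that is missing. With adjacent blocks $I_1,I_2$ of equal length, condition on the event of reaching time $a$ (same law under $\tau$ and $\hat\tau$), on the multiset of (identity, value) pairs whose arrival lands in $I_1\cup I_2$, and on the continuation value obtained if the end of $I_2$ is reached without stopping (again identical in law). Conditionally, each of these items falls into $I_1$ or $I_2$ independently with probability $\tfrac12$ and the within-block orders are uniform, so the Binomial split between the two blocks is symmetric. One then compares $\E[\,\text{first item in the early block above } \tau_1;\ \text{else first in the late block above } \tau_2;\ \text{else continuation}\,]$ to the same expression with $\tau_1$ and $\tau_2$ swapped, and it is precisely this symmetry of the random split that yields the inequality in expectation even though individual realizations can go the wrong way. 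That conditional-symmetry comparison, not a pointwise coupling, is what makes the swap step go through.
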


\section{Single-Threshold Algorithms} \label{sec:single}
In this section, we analyze the competition complexity of the class of single-threshold algorithms.
\begin{theorem}
The $(1-\epsilon)$-competition complexity of the class of single-threshold algorithms is $\Theta\left(\ln(\nicefrac{1}{\epsilon})\right)$. 
Moreover, this   
can be achieved using a blind single-threshold algorithm.\label{thm:blind-threshold}
\end{theorem}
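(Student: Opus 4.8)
The plan is to prove matching upper and lower bounds separately. For the \emph{upper bound}, we exhibit a blind single-threshold algorithm with $k = O(\ln(\nicefrac 1 \epsilon))$ copies and use a stochastic-domination argument: for every $x \ge 0$ we compare $\Pr[\ALG > x]$ with $\Pr[\opt > x]$, since $\E[\ALG] = \int_0^\infty \Pr[\ALG > x]\, dx$ and similarly for $\opt$. It suffices to show $\Pr[\ALG > x] \ge (1-\epsilon)\Pr[\opt > x]$ for all $x$ (or a suitable averaged version of this). Fix the threshold $\tau$ to be the quantile of $\opt$ with $\Pr[\opt \le \tau] = \epsilon$ (a blind choice), and with $k$ copies of each reward the effective instance is $V_{i,j} \sim F_i$ for $i \in [n]$, $j \in [k]$, so the prophet over the blown-up instance has distribution $\opt^k$ (the pointwise $k$-th power of the CDF is what matters for being above $x$ only through the original $\opt$; more precisely $\max_{i,j} V_{i,j}$ stochastically dominates $\opt$). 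Split into two cases.

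Case $x < \tau$: here $\Pr[\opt > x] \le 1$, so it suffices that $\Pr[\ALG \text{ selects anything}] \ge 1-\epsilon$. The probability the single-threshold algorithm selects nothing is $\prod_{i,j}\Pr[V_{i,j} \le \tau] = \prod_i F_i(\tau)^k = \Pr[\opt \le \tau]^k = \epsilon^k \le \epsilon$, which is more than enough (in fact one copy suffices for this case — the work is in the other case). Case $x \ge \tau$: here we want $\Pr[\ALG > x] \ge (1-\epsilon)\Pr[\opt > x]$. The key structural fact, exactly as flagged in the techniques section, is that a \emph{sufficient} condition for the domination at level $x$ is that the expected number of copies whose value exceeds $x$ that we actually get to observe is at least $1$; one then argues that with the threshold at $\tau$ the algorithm has not stopped before seeing such a copy except with the small probability accounted for above, and a union-bound / inclusion-exclusion comparison with the prophet (who just takes the max of all such copies) closes the gap up to the $(1-\epsilon)$ slack. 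Concretely, writing $q_i = 1 - F_i(x) = \Pr[V_i > x]$, the prophet selects a value $>x$ with probability $1 - \prod_i(1-q_i)^k$; the algorithm, conditioned on reaching a copy above $x$, selects one, and a calculation using Corollary~\ref{cor:f1fn} (replacing the $F_i$'s by $n$ copies of $\sqrt[n]{F_1\cdots F_n}$, which only helps) reduces everything to the i.i.d.-like bound. Choosing $k = C\ln(\nicefrac 1 \epsilon)$ for a suitable absolute constant $C$ makes $\prod_i (1-q_i)^k$ small enough relative to $1-\prod_i(1-q_i)^k$ whenever the latter is nontrivial — this is precisely the i.i.d. computation from \citep{DBLP:journals/corr/abs-2402-11084}, which goes through verbatim once the reduction to the max-distribution is in place.

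For the \emph{lower bound}, $k = \Omega(\ln(\nicefrac 1 \epsilon))$, we build a hard instance on which no single-threshold algorithm does well with $o(\ln(\nicefrac 1\epsilon))$ copies. The natural construction is an instance whose maximum is concentrated near a value $M$ (so $\E[\opt] \approx M$), together with many ``decoy'' rewards at a low value $1$; any single threshold $\tau$ either (i) sits above the decoys, in which case the algorithm waits for one of the rare high rewards and, with fewer than $\ln(\nicefrac 1\epsilon)$ copies, fails to see one with probability $> \epsilon$, losing a $(1-\epsilon)$ fraction of $M$; or (ii) sits at or below the decoys, in which case the algorithm almost surely grabs a decoy worth $1 \ll M$. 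Tuning the probability that a high reward appears so that $(\text{prob. of no high reward in } k \text{ copies}) > \epsilon$ exactly when $k < c\ln(\nicefrac 1 \epsilon)$, and making $M$ large enough that case (ii) is hopeless, gives the bound. This can be done with point masses and then smoothed to continuous distributions as the paper notes.

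The main obstacle I anticipate is the Case $x \ge \tau$ analysis of the upper bound: turning the informal ``expected number of observed high copies $\ge 1$ implies stochastic domination at level $x$'' into a clean inequality. The subtlety is that whether the algorithm has already stopped before the first high copy arrives is correlated with how many high copies there are and when they arrive, so one cannot naively condition; the right move is to use the continuous-time independence together with Observation~\ref{obs:tt} and Corollary~\ref{cor:f1fn} to decouple, reducing to the symmetric i.i.d. instance where the bound is a direct computation. Getting the constant $C$ in $k = C\ln(\nicefrac1\epsilon)$ to work uniformly over all instances (in particular when some $q_i$ are close to $1$ and others tiny) is where the symmetrization via Corollary~\ref{cor:f1fn} earns its keep.
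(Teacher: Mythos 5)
Your overall plan—stochastic domination, split on $x$ versus the threshold, symmetrize via Corollary~\ref{cor:f1fn}, and transfer the i.i.d.\ lower bound—is the right skeleton, and the lower-bound sketch (or simply the observation that i.i.d.\ instances are a subclass of random-arrival instances, so the i.i.d.\ single-threshold bound from \citep{DBLP:journals/corr/abs-2402-11084} transfers directly) is fine. The fatal problem is your threshold choice.

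You set $\tau$ at the $\epsilon$-quantile of $\opt$, i.e.\ $\Pr[\opt\le\tau]=\epsilon$. This makes Case $x<\tau$ trivial (you note it works with $k=1$), but it destroys Case $x\ge\tau$: the threshold is so low that the algorithm stops almost immediately, and the ``expected number of observed copies of each reward type is at least $1$'' condition you invoke simply fails. Concretely, take the symmetric instance with $n$ i.i.d.\ rewards and $F_i(\tau)=\epsilon^{1/n}$, so $\Pr[\opt\le\tau]=\epsilon$. Then $\Pr[T\ge t]=(1-t(1-\epsilon^{1/n}))^{nk}\to\epsilon^{tk}$ as $n\to\infty$, giving $\E[T]\approx\frac{1}{k\ln(1/\epsilon)}$, so the expected number of copies of each type seen before stopping is $k\E[T]\approx\frac{1}{\ln(1/\epsilon)}\ll 1$, no matter how large you take $k$. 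Pushing this through the domination computation: for $x$ at the median of $\opt$, $\Pr[\ALG>x]\approx\frac{\sum_i(1-p_i)}{\ln(1/\epsilon)}\approx\frac{\ln 2}{\ln(1/\epsilon)}$, while $\Pr[\opt>x]=\tfrac12$, so the domination inequality fails by a factor of order $\ln(1/\epsilon)$. No choice of $k=C\ln(1/\epsilon)$ repairs this because the ratio is independent of $k$ once $k$ is moderately large.

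The fix is to place $\tau$ at a \emph{constant} quantile of $\opt$ — the paper uses the median, $\Pr[\opt\le\tau]=\tfrac12$. Then $\E[T]\approx\frac{1}{k\ln 2}$, so $k\E[T]\ge 1$ for all $k\ge 2$, and Case $x>\tau$ closes with no $\epsilon$-dependence at all (you even get $\Pr[\ALG>x]\ge\Pr[\opt>x]$ exactly). All the $\ln(1/\epsilon)$ burden then falls on Case $x\le\tau$, where $\Pr[\ALG\text{ selects something}]=1-2^{-k}\ge 1-\epsilon$ needs $k\ge\log_2(1/\epsilon)$. The tension between the two cases is real, and the constant quantile is the right balance point; your $\epsilon$-quantile over-optimizes the easy case at the expense of the hard one. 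A secondary issue: in your Case $x\ge\tau$ you write the benchmark probability as $1-\prod_i(1-q_i)^k$, which is the blown-up prophet, not $\opt$; the benchmark in the competition-complexity definition is the prophet on a \emph{single} copy, $1-\prod_i(1-q_i)$, and the union bound $\Pr[\opt>x]\le\sum_i(1-q_i)$ is what one actually compares against.
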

\begin{proof}
The proof of the lower bound follows by a lower bound given in \citep{DBLP:journals/corr/abs-2402-11084} for the competition complexity of single-threshold algorithms in the i.i.d. case (where $F_1 =\ldots= F_n$).     

We show that $k = \lceil 2\ln(1/\epsilon)\rceil$ is sufficient.
In the remainder, we present the upper bound.
Consider the threshold $\tau$ satisfying  \begin{equation}
\Pr[\opt \leq \tau]=\frac{1}{2}. \nonumber
\end{equation}
We will use the observations from Section~\ref{sec:prelim}, and thus treat the threshold $\tau$ as a constant threshold function $\tau:[0,1] \rightarrow \reals_{\geq 0}$, where $\tau(t)=\tau$ for every $t\in[0,1]$.

We next show that for every value of $x$, it holds that $$ \Pr[\ALG >x] \geq (1-\epsilon) \Pr[\opt >x],$$
which implies the theorem.

\paragraph{Case 1: $x>\tau$.}  
Let $p_i = \Pr[V_i \leq x]$. 
By the Union bound, it holds that 
\begin{equation}
    \Pr[\opt > x]  \leq \sum_{i \in [n]}\Pr[V_i>x] = \sum_{i\in [n]} (1-p_i). \label{eq:opt2-continuous}
\end{equation}
On the other hand, we note that since $x>\tau$,  if the algorithm reaches time $t_{i,j}$ where $V_{i,j}>x$, then the algorithm selects a value larger than $x$.
Thus,  we get
\begin{eqnarray}
\Pr[\ALG > x] &\geq& \int_{0}^{1} \sum_{i \in [n], j\in [k]} \Pr[T \geq t_{i,j} \wedge V_{i,j} > x \mid t_{i,j} = t]dt \nonumber \\
&=& \int_{0}^{1} \sum_{i \in [n], j\in [k]} \Pr[T \geq t_{i,j} \mid t_{i,j} = t] \cdot \Pr[V_{i,j} > x] dt \nonumber \\
&\geq& k \cdot \sum_{i \in [n]} (1 - p_i) \cdot \int_{0}^{1} \Pr[T \geq t] dt \nonumber \\ 
&=& k \cdot \sum_{i \in [n]} (1 - p_i) \cdot \int_{0}^{1} \left( 1 - p_\tau(t,\underbrace{F_1,\ldots,F_1}_{k \text{ times}},\ldots,\underbrace{F_{n},\ldots,F_n}_{k \text{ times}}) \right) dt,  \label{eq:algx} 
\end{eqnarray}
where the first inequality is since if $t_{i,j}$ is reached and  $V_{i,j} > x$, then reward $V_{i,j}$ is selected; the first equality is since the events of time $t_{i,j}$ is reached and $V_{i,j} > x$ are independent when fixing $t_{i,j} = t$; the second inequality holds by Observation~\ref{obs:tt} and by the definition of $p_i$'s; the second equality is by definition of $p_{\tau}$.

We also know that
\begin{eqnarray}
\int_{0}^{1} \left( 1 - p_\tau(t,\underbrace{F_1,\ldots,F_1}_{k \text{ times}},\ldots,\underbrace{F_{n},\ldots,F_n}_{k \text{ times}}) \right) dt & \geq &
 \int_{0}^{1} \left( 1 - p_{\tau}(t,\underbrace{\sqrt[n]{F_1 \cdot \ldots \cdot F_n},\ldots,\sqrt[n]{F_1 \cdot \ldots \cdot F_n}}_{ nk \text{ times}}) \right)dt \nonumber  \\ &=&  \int_{0}^{1} \left( 1 - p_{\tau}(t,\sqrt[n]{F_1 \cdot \ldots \cdot F_n}) \right)^{nk} dt \nonumber \\
 &=&  \int_{0}^{1} \left( 1-t+t \cdot \sqrt[n]{F_1 \cdot \ldots \cdot F_n(\tau(t))}  \right)^{nk} dt \nonumber \\
&=&  \int_{0}^{1} \left( 1-t+t \cdot \sqrt[n]{\frac{1}{2}} \right)^{nk} dt \geq  \int_{0}^{1} 2^{-tk} dt
 \nonumber \\ 
& =&\left.-\frac{2^{-tk}}{\ln(2)\cdot k}  \right|_{0}^{1}  = \frac{1 -2^{-k}}{\ln(2)\cdot k}  \geq \frac{1}{k}, \label{eq:int}
\end{eqnarray}
where the first inequality holds by Corollary~\ref{cor:f1fn}; the first and second equalities are by definition of $p_{\tau}$; the third equality is since for $t\in[0,1],$ it holds that $F_1 \cdot \ldots \cdot F_n(\tau(t)) = \frac{1}{2}$; the second inequality holds for every $n$; the last inequality holds for every $k\geq 2$.

Overall, we get that
$$ \Pr[\ALG >x  ] \stackrel{\eqref{eq:algx}}{\geq} k \cdot \sum_{i \in [n]} (1 - p_i) \cdot \int_{0}^{1} \left( 1 - p_\tau(t,\underbrace{F_1,\ldots,F_1}_{k \text{ times}},\ldots,\underbrace{F_{n},\ldots,F_n}_{k \text{ times}}) \right) dt \stackrel{\eqref{eq:int}}{\geq} \sum_{i \in [n]} (1 - p_i) \stackrel{\eqref{eq:opt2-continuous}}{\geq}  \Pr[\opt > x],$$
which concludes the proof of this case.

\paragraph{Case 2: $x \leq \tau$.} 
In this case, it holds that 
\begin{eqnarray}
\Pr[\ALG > x] &=&  p_\tau(1,\underbrace{F_1,\ldots,F_1}_{k \text{ times}},\ldots,\underbrace{F_{n},\ldots,F_n}_{k \text{ times}})  \geq p_\tau(1,\underbrace{F_1\cdot \ldots \cdot F_n,\ldots,F_1\cdot\ldots\cdot F_{n}}_{k \text{ times}}) \nonumber \\ & = & 1- \left(1-p_\tau(1,F_1\cdot \ldots \cdot F_n) \right)^k =  1- \left(\frac{1}{2}\right)^k \geq 1-\epsilon \geq (1-\epsilon) \cdot\Pr[\opt >x], \nonumber
\end{eqnarray}
where the first inequality is by Lemma~\ref{lem:f1f23}; the third equality is by definition of $\tau$; the second inequality holds for every $\epsilon>0$ by our choice of $k$; and the last inequality is since a probability is bounded by $1$.
This concludes the proof of the theorem.
\end{proof}

\section{Time-Based Threshold Algorithms} \label{sec:time}
In this section, we analyze the case of time-based threshold algorithms. We show that 
\begin{theorem}
The $(1-\epsilon)$-competition complexity of the class of time-based threshold algorithms is $\Theta\left(\frac{\ln(\nicefrac{1}{\epsilon})}{\ln\ln(\nicefrac{1}{\epsilon})}\right)$.
Moreover, 
this can be achieved using a blind algorithm. \label{thm:blind}
\end{theorem}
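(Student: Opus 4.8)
The plan is to prove matching $\Theta\!\bigl(\tfrac{\ln(1/\epsilon)}{\ln\ln(1/\epsilon)}\bigr)$ upper and lower bounds. For the upper bound I would reuse the stochastic-dominance framework of the proof of Theorem~\ref{thm:blind-threshold} — it suffices to exhibit a (blind) $\ALG$ with $\Pr[\ALG>x]\ge(1-\epsilon)\Pr[\opt>x]$ for every $x\ge 0$ — but now with a \emph{decreasing} threshold schedule. Set $k=\Theta\!\bigl(\tfrac{\ln(1/\epsilon)}{\ln\ln(1/\epsilon)}\bigr)$ and let $\tau$ be the blind threshold with quantile function $\alpha(t)=\min\{\tfrac12,\tfrac1{tk}\}$: constant at the median of $\opt$ on $[0,\tfrac2k]$, then decreasing so that $\Pr[\opt\le\tau(t)]=\tfrac1{tk}$ on $[\tfrac2k,1]$ (so $\tau$ is non-increasing, consistent with Observation~\ref{obs:monotone}, and ends at the $\tfrac1k$-quantile). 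I would then split on $x$ exactly as in Theorem~\ref{thm:blind-threshold}. If $\Pr[\opt\le x]\ge\tfrac12$ then $x\ge\tau(t)$ for all $t$ and the Case~1 argument goes through once $\int_0^1\Pr[T\ge t]\,dt\ge\tfrac1k$, which I would get by restricting the integral to $[0,\tfrac2k]$, where $\tau$ is the constant median threshold, and reusing the bound $\Pr[T\ge t]\ge 2^{-tk}$ derived there. If $\Pr[\opt\le x]=q\in(\tfrac1k,\tfrac12)$, let $t^{*}=\tfrac1{qk}$ be the time with $\tau(t^{*})=x$; since $\tau(t)\ge x$ for $t\le t^{*}$, any value selected before $t^{*}$ exceeds $x$, so $\Pr[\ALG>x]\ge\Pr[T\le t^{*}]$, and merging the $k$ copies of each $F_i$ into a single $\opt$-distributed reward via repeated use of Lemma~\ref{lem:f1f23} gives $\Pr[T\le t^{*}]\ge 1-(1-p_\tau(t^{*},\opt))^k\ge 1-e^{-k\,p_\tau(t^{*},\opt)}$; the direct evaluation $p_\tau(t^{*},\opt)=\int_0^{t^{*}}(1-\alpha(s))\,ds=\tfrac1k\bigl(\tfrac1q-1+\ln(2q)\bigr)$ then yields $\Pr[\ALG>x]\ge 1-\tfrac{e}{2q}e^{-1/q}\ge 1-q\ge(1-\epsilon)\Pr[\opt>x]$, using the elementary inequality $\tfrac{e}{2q}e^{-1/q}\le q$ on $(0,\tfrac12]$. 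If $\Pr[\opt\le x]\le\tfrac1k$ then $x\le\tau(1)\le\tau(t)$ for all $t$, so it only remains to show $\ALG$ selects \emph{something} with probability at least $1-\epsilon$.

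The heart of the upper bound is this last point: bounding the no-selection probability. Here I would establish the clean instance-independent bound $\Pr[T=1]\le\bigl(\int_0^1\alpha(s)\,ds\bigr)^k$. Indeed $\Pr[T=1]=\prod_{i,j}\Pr[V_{i,j}\le\tau(t_{i,j})]=\prod_i a_i^k$ with $a_i=\int_0^1 F_i(\tau(s))\,ds$, and since each $s\mapsto F_i(\tau(s))$ is non-increasing, Chebyshev's integral (correlation) inequality gives $\prod_i a_i\le\int_0^1\prod_i F_i(\tau(s))\,ds=\int_0^1\Pr[\opt\le\tau(s)]\,ds=\int_0^1\alpha(s)\,ds$. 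With the chosen schedule $\int_0^1\alpha(s)\,ds=\tfrac1k\bigl(1+\ln\tfrac k2\bigr)=\Theta\!\bigl(\tfrac{\ln k}{k}\bigr)$, so $\Pr[T=1]\le\bigl(\tfrac{1+\ln(k/2)}{k}\bigr)^k\le\epsilon$ as soon as $k\bigl(\ln k-\ln(1+\ln\tfrac k2)\bigr)\ge\ln\tfrac1\epsilon$, which holds for $k=\Theta\!\bigl(\tfrac{\ln(1/\epsilon)}{\ln\ln(1/\epsilon)}\bigr)$. The care needed here is (i) keeping the initial window $[0,\tfrac2k]$ at the constant median threshold so the first case still yields $\int_0^1\Pr[T\ge t]\,dt\ge\tfrac1k$, and (ii) applying Lemma~\ref{lem:f1f23} and Corollary~\ref{cor:f1fn} in the correct monotone direction along the decreasing part of $\tau$.

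For the lower bound I would construct a single hard product instance on which no time-based threshold algorithm with $k=o\!\bigl(\tfrac{\ln(1/\epsilon)}{\ln\ln(1/\epsilon)}\bigr)$ copies obtains a $(1-\epsilon)$-fraction of $\E[\opt]$, designed so that the two competing requirements above are simultaneously tight. It would combine (a) a ``scale'' component whose $\opt$ is spread over geometrically growing values, forcing any non-increasing schedule that maintains $\Pr[\ALG>x]\ge(1-\epsilon)\Pr[\opt>x]$ at every quantile $q$ to keep its threshold at the $q$-quantile until time $\Omega(\tfrac1{qk})$ — this is the sense in which the $\tfrac1{tk}$ decrease rate of the upper bound cannot be beaten — so that necessarily $\int_0^1\alpha(s)\,ds=\Omega(\tfrac{\ln k}{k})$; and (b) a component (noise concentrated just below the scales) that blocks the trivial ``use a low threshold and grab the first arrival'' fix and forces the no-selection probability to be essentially $\bigl(\int_0^1\alpha(s)\,ds\bigr)^k$ on this instance. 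Together, a $(1-\epsilon)$-approximation forces $\bigl(\int_0^1\alpha\bigr)^k=O(\epsilon)$, i.e.\ $k\ln k=\Omega(\ln\tfrac1\epsilon)$, i.e.\ $k=\Omega\!\bigl(\tfrac{\ln(1/\epsilon)}{\ln\ln(1/\epsilon)}\bigr)$. The main obstacle is making (a) and (b) coexist in one explicit instance and proving the ``cannot decrease faster than $\tfrac1{tk}$'' claim rigorously — concretely, showing that once the (WLOG non-increasing) threshold has dropped below the $q$-quantile it selects a value exceeding that quantile with only negligible probability — which I expect to require a delicate time-indexed case analysis mirroring, but going beyond, the upper-bound split.
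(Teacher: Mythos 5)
Your upper bound is essentially the paper's: you use the same blind quantile schedule $\alpha(t)=\min\{\tfrac12,\tfrac1{tk}\}$, the same three-way split on $\Pr[\opt\le x]$ (with the same threshold quantiles $\tfrac12$ and $\tfrac1k$), the same restriction of the Case 1 integral to $[0,\tfrac2k]$, and the same $1-e^{-kp_\tau}$-style bound in the middle case. Your formula $p_\tau(t^*,\opt)=\tfrac1k(\tfrac1q-1+\ln(2q))$ and the resulting bound $1-\tfrac{e}{2q}e^{-1/q}\ge 1-q$ match the paper's $1-\tfrac{et_xk}{2}e^{-t_xk}\ge 1-\tfrac1{t_xk}$ after the substitution $q=1/(t_xk)$. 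In Case 3 you reach the identical quantity $\bigl(\int_0^1\alpha\bigr)^k=\bigl(\tfrac{1+\ln(k/2)}{k}\bigr)^k$ via Chebyshev's correlation inequality instead of one more application of Lemma~\ref{lem:f1f23}; since Lemma~\ref{lem:f1f23} applied at $t=1$ already gives exactly $\Pr[T=1]\le\bigl(\int_0^1\alpha\bigr)^k$, your derivation is a valid but unnecessary detour. (Also, your worry about ``applying Lemma~\ref{lem:f1f23}/Corollary~\ref{cor:f1fn} in the correct monotone direction along the decreasing part of $\tau$'' is moot --- those statements hold for arbitrary threshold functions $\tau$, increasing or not.)

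The lower bound, however, has a genuine gap, and moreover your proposed route has a structural problem. You frame the argument via two constraints: (a) the schedule ``cannot decrease faster than $\tfrac1{tk}$'' because it must maintain $\Pr[\ALG>x]\ge(1-\epsilon)\Pr[\opt>x]$ at every quantile, and (b) the no-selection probability must be at most $\epsilon$. But (a) is not a valid constraint on an arbitrary time-based threshold algorithm: pointwise stochastic dominance is a \emph{sufficient} condition used to prove the upper bound, not a necessary one for $\E[\ALG]\ge(1-\epsilon)\E[\opt]$. An adversary-designed schedule could violate the dominance at many quantiles and compensate elsewhere, so ruling out schedules that satisfy dominance does not rule out schedules that achieve the approximation. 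You also explicitly acknowledge that you have not produced the instance or proved the rate claim, so the whole lower bound remains a heuristic. The paper's construction is both simpler and avoids the flawed framing entirely: a two-type instance --- a deterministic reward of value $1$ and a reward that is $1+\sqrt\epsilon$ with probability $1-\tfrac1k$ and $0$ otherwise --- for which any non-increasing schedule is WLOG parametrized by the single switch time $t$ at which it starts accepting value $1$. If $t\le\tfrac1{2k}$, a constant-probability ``all rewards arrive after $t$'' event forces the algorithm to grab the first reward, which is value $1$ with probability $\ge\tfrac12$, losing $\Omega(\sqrt\epsilon)$. If $t>\tfrac1{2k}$, with probability $\ge(\tfrac1{2k^2})^k$ all deterministic rewards arrive before $t$ and all random rewards realize to $0$, so the algorithm gets nothing; this loss exceeds the allowed slack unless $k=\Omega\bigl(\tfrac{\ln(1/\epsilon)}{\ln\ln(1/\epsilon)}\bigr)$. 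This dichotomy argument is direct, makes no appeal to per-quantile dominance, and is the piece your proposal is missing.
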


Our upper bound is based on constructing a blind algorithm, which is given in Section~\ref{sec:alg-non-adapt}, and our lower bound which holds for any time-based threshold algorithm is proved in Section~\ref{sec:ex-non-adapt}.

\subsection{Blind Algorithms}\label{sec:alg-non-adapt}

Let $\alpha: [0,1] \to [0,1)$ be the following quantile function: 
$$
\alpha(t) = \begin{cases} 
\frac{1}{2}, & \text{if } t \leq \frac{2}{k}, \\
\frac{1}{tk}, & \text{if } t > \frac{2}{k}.
\end{cases}
$$
Note that this is a valid quantile function since $\frac{1}{tk}\in [0,1)$ for every $t> \frac{2}{k}$.
The corresponding threshold function of $\alpha$ is $\tau: [0,1] \to \mathbb{R}$ satisfying for every $t\in[0,1]$ that
$$\Pr\left[\opt \leq \tau(t)\right] = \alpha(t).$$

\begin{lemma}
The $(1-\epsilon)$-competition complexity of the blind  algorithm defined by $\alpha$ is $O\left(\frac{\ln(\nicefrac{1}{\epsilon})}{\ln\ln(\nicefrac{1}{\epsilon})}\right)$. \label{lem:blind-upper}
\end{lemma}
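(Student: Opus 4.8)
The plan is to mimic the single-threshold analysis from Theorem~\ref{thm:blind-threshold}, but with three regimes of $x$ instead of two, exploiting the decreasing threshold $\tau(t)$. Set $k = \Theta\!\left(\frac{\ln(1/\epsilon)}{\ln\ln(1/\epsilon)}\right)$ with the right constant, and let $\tau_{\min} = \tau(1)$ be the final (lowest) threshold, which by construction satisfies $\Pr[\opt \le \tau_{\min}] = \frac{1}{k}$. As in the single-threshold proof, it suffices to show $\Pr[\ALG > x] \ge (1-\epsilon)\Pr[\opt > x]$ for every $x \ge 0$, and I would split into: (i) $x$ large, specifically $x > \tau(2/k) = \tau$, where $\tau$ is the median of $\opt$; (ii) $x$ in the ``sweep'' range $\tau_{\min} < x \le \tau$; and (iii) $x \le \tau_{\min}$.

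For case (iii), $x \le \tau_{\min}$, I would argue exactly as in Case 2 of Theorem~\ref{thm:blind-threshold}: $\Pr[\ALG > x] \ge p_\tau(1, \underbrace{F_1\cdots F_n, \ldots}_{k})$ using Lemma~\ref{lem:f1f23} (monotonicity of $\tau$ is needed here, and we already use the sorted version by Observation~\ref{obs:monotone}); since the algorithm uses threshold at most $\tau_{\min}$ near time $1$ and $\Pr[\opt \le \tau_{\min}] = \frac1k$, the probability of not stopping at all is at most $(1 - (1 - \tfrac1k))^{\Theta(k)}$-type bound — more carefully, I need to integrate the survival probability and show it is at most $\epsilon$, which requires $k$ to be at least a constant times $\frac{\ln(1/\epsilon)}{\ln\ln(1/\epsilon)}$; this is where the $\ln\ln$ loss (relative to single-threshold) comes from, since the threshold is only as low as the $\frac1k$-quantile rather than going all the way down. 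Case (i), $x > \tau$, copies Case 1 of Theorem~\ref{thm:blind-threshold} almost verbatim: on $[0, 2/k]$ the threshold is exactly the median threshold $\tau$, so the same computation — union bound on $\Pr[\opt > x]$ against $k \sum_i (1-p_i) \int_0^1 \Pr[T \ge t]\,dt$, then bounding the survival integral below by $\int_0^{2/k} (1 - t/2)^{nk}\,dt \ge \frac{1}{k}$ via Corollary~\ref{cor:f1fn} and the same $2^{-tk}$ estimate — gives $\Pr[\ALG > x] \ge \Pr[\opt > x]$ with no $\epsilon$ loss.

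The heart of the argument is case (ii), $\tau_{\min} < x \le \tau$. Here there is a time $t_x$ with $\tau(t_x) = x$, i.e. $\Pr[\opt \le x] = \alpha(t_x) = \frac{1}{t_x k}$ (for $t_x > 2/k$), so $\Pr[\opt > x] = 1 - \frac{1}{t_x k}$. I want to show $\Pr[\ALG > x]$ is within a $(1-\epsilon)$ factor of this. The point is that for times $t \ge t_x$ the threshold has dropped to $\tau(t) \le x$, so any reward with value $> x$ arriving after $t_x$ that is "reached" gets selected; and for $t < t_x$ the algorithm might stop below $x$, but I also need a lower bound on the chance it is still running at time $t_x$. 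I would write $\Pr[\ALG > x] \ge \Pr[\ALG \text{ selects some } V_{i,j} > x] \ge \int_{t_x}^1 \sum_{i,j}\Pr[T \ge t_{i,j} \mid t_{i,j}=t]\Pr[V_{i,j} > x]\,dt \ge k\sum_i(1-p_i)\int_{t_x}^1 \Pr[T \ge t]\,dt$ by Observation~\ref{obs:tt}, where now $p_i = \Pr[V_i \le x]$ and $\sum_i(1-p_i) \ge \Pr[\opt > x]$ by the union bound. The survival integral $\int_{t_x}^1 \Pr[T \ge t]\,dt$ is bounded below, using Corollary~\ref{cor:f1fn}, by $\int_{t_x}^1 \prod\bigl(1 - p_\tau(t, \sqrt[nk]{\cdot})\bigr)\,dt$; substituting the blind structure $\Pr[\opt \le \tau(s)] = \alpha(s)$ turns $1 - p_\tau(t, \text{merged})$ into roughly $\exp\!\bigl(-\int_0^t \frac{1 - \alpha(s)}{?}\,ds\bigr)$-type expression, and plugging $\alpha(s) = \frac{1}{sk}$ for $s \in [t_x, t]$ should yield a survival probability at time $t$ of order $\bigl(\frac{t_x}{t}\bigr)$ up to constants, so that $k\int_{t_x}^1 \Pr[T\ge t]\,dt \gtrsim k t_x \ln(1/t_x) \cdot(\text{const})$, which one checks is $\ge 1$ for the relevant range $t_x \ge \frac{2}{k}$ (when $t_x$ is close to $1$ the bound $1-\frac{1}{t_x k}$ is close to $1$ and one needs the $(1-\epsilon)$ slack). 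The main obstacle — and the step I would budget the most care for — is getting this survival-probability estimate tight enough: the rate $\frac{1}{tk}$ in $\alpha$ is chosen precisely so that the stochastic-dominance inequality holds with only a $(1-\epsilon)$ loss, and verifying that the constants work out across the whole range $t_x \in [2/k, 1)$, together with pinning down the exact constant in $k = \Theta\!\left(\frac{\ln(1/\epsilon)}{\ln\ln(1/\epsilon)}\right)$ needed for case (iii), is the delicate part; everything else is bookkeeping parallel to Theorem~\ref{thm:blind-threshold}.
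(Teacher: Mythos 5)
Your case split matches the paper's, and Cases (i) and (iii) are on the right track (Case (i) is indeed Case~1 of Theorem~\ref{thm:blind-threshold} restricted to $[0,2/k]$; Case (iii) needs the explicit computation $\left(\frac{\ln(ek/2)}{k}\right)^k \le \epsilon$, which your ``$\bigl(1/k\bigr)^{\Theta(k)}$-type bound'' under-sells but roughly captures). However, your Case (ii) has the key mechanism backward, and as sketched it does not close.

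You assert that ``for $t < t_x$ the algorithm might stop below $x$.'' The opposite is true: since $\tau$ is non-increasing and $\tau(t_x) = x$, every $t < t_x$ has $\tau(t) \ge \tau(t_x) \ge x$, so \emph{any} stop before $t_x$ automatically selects a value $> \tau(t) \ge x$. Stopping early is the good event, not the bad one. The paper's Case~2 rests entirely on this: it writes
\[
\Pr[\ALG > x] \;\ge\; \Pr[T < t_x] \;=\; p_\tau(t_x,\dots) \;\ge\; 1 - \bigl(1 - p_\tau(t_x, F_1\cdots F_n)\bigr)^k,
\]
uses the closed form $1 - p_\tau(t_x, F_1\cdots F_n) = 1 - t_x + \frac{\ln(ekt_x/2)}{k}$, and then the elementary chain $1 - \left(1-t_x+\frac{\ln(ekt_x/2)}{k}\right)^k \ge 1 - \frac{e t_x k}{2} e^{-t_x k} \ge 1 - \frac{1}{t_x k} = \Pr[\opt > x]$. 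No integral over arrival times is needed.

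Your replacement bound $k\sum_i(1-p_i)\int_{t_x}^1 \Pr[T\ge t]\,dt$ only counts selections arriving \emph{after} $t_x$, which is the subdominant term. It provably cannot reach $(1-\epsilon)\Pr[\opt > x]$ across the whole range: for $x$ near the $(1-1/k)$-quantile we have $t_x \to 1$, so the integral is at most $1 - t_x \to 0$ and $k\int_{t_x}^1\Pr[T\ge t]\,dt \le k(1-t_x) \to 0$, while the target $\Pr[\opt > x] = 1 - \frac{1}{t_x k}$ stays bounded away from $0$. To fix this you would have to add the probability of stopping on $[0,t_x)$, at which point you recover the paper's argument and the $[t_x,1]$ piece becomes unnecessary.
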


\begin{proof}
We show that $k = \left\lceil\frac{2\ln(\nicefrac{1}{\epsilon})}{\ln\ln(\nicefrac{1}{\epsilon})}\right\rceil$ is sufficient.
We next show that for every value of $x$, it holds that $$ \Pr[\ALG >x] \geq (1-\epsilon) \Pr[\opt >x].$$

\paragraph{Case 1: $x$ is such that $\Pr[\opt >x]\leq \frac{1}{2}$.} The proof of this case is almost identical to the proof of Case 1 in the proof of Theorem~\ref{thm:blind-threshold}, where the only difference is that we analyze the performance of the algorithm until time $\frac{2}{k}$, in which it uses the same threshold as in Theorem~\ref{thm:blind-threshold}. For completeness, the proof of this case can be found in Appendix~\ref{app:case1}.

\paragraph{Case 2: $x$ is such that $\Pr[\opt >x] \in (\frac{1}{2},1-\frac{1}{k} ]$.}

Let $ t_x =\frac{1}{k \cdot \Pr[ \opt  \leq x ]} $.
Note that $1\geq t_x>2/k$, and by definition of $\alpha$ and the corresponding threshold function we have that $\tau(t_x) \geq  x$. Thus, it holds that
\begin{eqnarray}
 \Pr[\ALG > x] & \geq &   p_\tau(t_x,\underbrace{F_1,\ldots,F_1}_{k \text{ times}},\ldots,\underbrace{F_{n},\ldots,F_n}_{k \text{ times}}) \nonumber \\  & \geq & p_\tau(t_x,\underbrace{F_1\cdot \ldots \cdot F_n,\ldots,F_1\cdot\ldots\cdot F_{n}}_{k \text{ times}}) \nonumber \\ & = &1- (1-p_\tau(t_x,F_1\cdot \ldots\cdot F_n))^k, \label{eq:algx1}
 \end{eqnarray}
where the first inequality is since if the algorithm selects a value before time $t_x$, it must be at least $x$, and the second inequality is by Lemma~\ref{lem:f1f23}. 
It also holds that
\begin{equation}
    1- p_\tau(t_x,F_1\cdot \ldots \cdot F_n) = \int_{0}^{\frac{2}{k}} \frac{1}{2} dt + \int_{\frac{2}{k}}^{t_x} \frac{1}{tk}dt + \int_{t_x}^1 1dt = 1-t_x+\frac{\ln(ekt_x/2)}{k}   .\label{eq:1minp}
\end{equation} 
Overall, we get that
$$ \Pr[\ALG > x]\geq  1-\left(1-t_x+\frac{\ln(ekt_x/2)}{k} \right)^k  \geq 1- \frac{e\cdot t_x \cdot k}{2} \cdot e^{-t_x \cdot k} \geq 1-\frac{1}{t_x\cdot k}  =  \Pr[\opt > x], $$
where the first inequality is by Inequalities~\eqref{eq:algx1} and \eqref{eq:1minp}, the second inequality is by the identity $e^{-x} \geq (1-x/k)^k $, and the third inequality holds for every value of $t_x>0$.

\paragraph{Case 3: $x$ is such that $\Pr[\opt >x]  >1-\frac{1}{k} $.} 
In this case, for every time $t \in [0,1]$, it holds that 
$ \tau(t) >x$, which implies that the probability of selecting a value larger than $x$, is the probability of selecting a value overall. Thus,
\begin{eqnarray}
\Pr[\ALG > x] &=&  p_\tau(1,\underbrace{F_1,\ldots,F_1}_{k \text{ times}},\ldots,\underbrace{F_{n},\ldots,F_n}_{k \text{ times}})  \geq p_\tau(1,\underbrace{F_1\cdot \ldots \cdot F_n,\ldots,F_1\cdot\ldots\cdot F_{n}}_{k \text{ times}}) \nonumber \\ & = & 1- \left(1-p_\tau(1,F_1\cdot \ldots \cdot F_n) \right)^k =  1- \left(\frac{\ln(ek/2)}{k}\right)^k \geq 1-\epsilon \geq (1-\epsilon) \cdot\Pr[\opt >x], \nonumber
\end{eqnarray}
where the first inequality is by Lemma~\ref{lem:f1f23}; the third equality is by setting $t_x=1$ in Equation~\eqref{eq:1minp}; the second inequality holds for every $\epsilon <\nicefrac{1}{e}$ by our choice of $k$; and the last inequality is since a probability is bounded by $1$.
This concludes the proof of the lemma.
\end{proof}

\subsection{Time-Based Threshold Algorithms: Hardness}\label{sec:ex-non-adapt}

In this section, we show a tight bound on the competition complexity of time-based threshold algorithms.
\begin{lemma}
The $(1-\epsilon)$-competition complexity of the class of time-based threshold algorithms is $\Omega\left(\frac{\ln(\nicefrac{1}{\epsilon})}{\ln\ln(\nicefrac{1}{\epsilon})}\right)$. \label{lem:blind-lower}
\end{lemma}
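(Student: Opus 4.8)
The plan is to construct an explicit hard instance --- necessarily using time-based threshold algorithms' inability to condition on reward identities --- on which no non-increasing threshold function can achieve a $(1-\epsilon)$-approximation unless $k = \Omega\!\left(\frac{\ln(1/\epsilon)}{\ln\ln(1/\epsilon)}\right)$. By Observation~\ref{obs:monotone} it suffices to rule out non-increasing threshold functions. The natural construction is a ``hierarchy of scales'' instance: take $L$ levels, where level $\ell$ contributes rewards that are (deterministically, or with tiny probability) of value roughly $M^\ell$ for a large separation parameter $M$, and where the prophet's value is dominated by the highest-value reward that materializes. The key tension a threshold algorithm faces is that a \emph{single} non-increasing threshold curve $\tau(\cdot)$ must simultaneously (i) be low enough early on to have a decent chance of catching a high-value reward before time runs out, yet (ii) not be so low that it gets ``distracted'' by an abundant supply of low-value rewards and commits to one of them before a rare high-value reward arrives. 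With $k$ copies per reward, the low-value rewards become $k$ times as plentiful, which is exactly the resource that has to overcome this tension; the claim is that $k$ must grow like $\ln(1/\epsilon)/\ln\ln(1/\epsilon)$ to do so across all levels.

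Concretely, I would set the number of levels to $L \approx \frac{\ln(1/\epsilon)}{\ln\ln(1/\epsilon)}$ and choose, at each level $\ell$, a single reward distribution that takes the high value $h_\ell$ with probability $p_\ell$ and value $0$ otherwise, with the $p_\ell$'s and $h_\ell$'s tuned so that: the expected prophet value is dominated (up to $1-o(\epsilon)$) by a ``target'' level $\ell^\star$, and the probability that \emph{some} reward of level $\ell^\star$ or higher realizes is only slightly more than $\epsilon$ (so missing it costs a $(1-\epsilon)$ factor). Given a non-increasing $\tau$, partition $[0,1]$ according to which level's high value $\tau$ currently sits below: $\tau$ is ``committed to level $\ge j$'' on a time sub-interval $I_j$, and these intervals are ordered because $\tau$ is non-increasing. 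On $I_j$ the algorithm, if it reaches that time, will grab the first realized reward of level $\ge j$; in particular it will grab low-value ($<h_{\ell^\star}$) rewards if the threshold is below $h_{\ell^\star}$ during any interval where plenty of them arrive. The accounting then shows: to not miss the level-$\ell^\star$ reward, $\tau$ must stay above $h_{\ell^\star}$ for long enough that the level-$\ell^\star$ reward (arriving uniformly in $[0,1]$, in $k$ copies) lands in the ``safe'' window with probability $\ge 1-O(\epsilon)$ --- but the amount of time $\tau$ can afford to stay that high is geometrically eaten away, level by level, by the need to also not miss levels between $\ell^\star$ and $L$, each of which (with $k$ copies) contributes a factor. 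Unrolling this recursion gives that the survival probability is at most something like $(c/k)^{L}$ or $L!/k^L$ summed appropriately, forcing $k^L \gtrsim L!$ roughly, i.e.\ $L \ln k \gtrsim L \ln L$, and then plugging $L \approx \ln(1/\epsilon)/\ln\ln(1/\epsilon)$ and requiring the miss probability to be $\le \epsilon$ pins down $k = \Omega(L) = \Omega(\ln(1/\epsilon)/\ln\ln(1/\epsilon))$.

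The main obstacle, and where the care is needed, is making the level-by-level loss argument rigorous for an \emph{arbitrary} non-increasing threshold function rather than a nicely structured one: one must argue that whatever time-budget $\tau$ allocates to ``protecting'' each level, the product of per-level success probabilities cannot beat the target. This is exactly the kind of place where the bounds from Section~\ref{sec:prelim} (Lemma~\ref{lem:f1f23}, Corollary~\ref{cor:f1fn}, Observation~\ref{obs:tt}) are meant to be used --- Corollary~\ref{cor:f1fn} lets us replace the $k$ copies of each level's distribution by $k$ i.i.d.\ copies of a merged distribution, turning ``$\Pr[\text{not stopped by time }t]$'' into a clean $\big(1-t+t\cdot(\text{CDF})\big)^{nk}$-type expression that is explicitly integrable, and Observation~\ref{obs:tt} handles the conditioning on a particular arrival time. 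The bookkeeping --- choosing $p_\ell,h_\ell$, verifying the prophet's expectation concentrates on $\ell^\star$, and extracting the $\ln(1/\epsilon)/\ln\ln(1/\epsilon)$ from the resulting inequality $k^{L}\gtrsim L!$ via Stirling --- is routine once the recursive loss inequality is in hand, so I expect the bulk of the proof's length to be in setting up the instance and establishing that single recursion cleanly.
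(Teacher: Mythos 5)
Your high-level plan is a genuinely different construction from the paper's, and the crux of it is left unestablished. The paper does not use a hierarchy of $L$ scales at all: it uses an instance with only \emph{two} types of rewards -- one deterministic of value $1$, and one of value $1+\sqrt{\epsilon}$ with probability $1-p$ and value $0$ with probability $p$, where the failure probability is tuned to $p = 1/k$. By Observation~\ref{obs:monotone}, any non-increasing threshold on this instance is determined by a single switching time $t$ at which it drops from accepting only $1+\sqrt{\epsilon}$ to also accepting $1$. The proof then splits on $t$: if $t \le \tfrac{1}{2k}$, with constant probability all $2k$ rewards arrive after $t$ and the algorithm grabs a low-value reward about half the time, losing $\Omega(\sqrt{\epsilon})$ relative to the prophet; if $t > \tfrac{1}{2k}$, the event ``all $k$ deterministic rewards arrive before $t$ \emph{and} all $k$ random rewards realize to $0$'' has probability $\ge (p t)^k \ge (1/(2k^2))^k$, under which the algorithm gets $0$. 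The factor $(1/(2k^2))^k \approx k^{-2k}$ is what forces $k = \Omega(\ln(1/\epsilon)/\ln\ln(1/\epsilon))$. Notably, Lemma~\ref{lem:f1f23}, Corollary~\ref{cor:f1fn}, and Observation~\ref{obs:tt} are not used in the paper's lower-bound proof at all; those tools are for the upper bounds, and invoking them here is a red herring.

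The gap in your proposal is concrete: you reduce everything to a ``recursive loss inequality'' yielding $k^L \gtrsim L!$, but you do not derive it, and you explicitly flag it as the hard part. Without that recursion the argument has no content. Moreover, the structure of the inequality you want ($L!$ on the right) is precisely the $k^{-k}$-type decay that the paper obtains directly and without recursion from the single probability $p^k$ with $p=1/k$; the hierarchy of scales adds complexity without being necessary. If you want to pursue your multi-level route you would need to (i) specify $p_\ell, h_\ell$ exactly, (ii) show the prophet's value concentrates as claimed, and (iii) prove that \emph{every} non-increasing $\tau$ incurs the claimed product of per-level losses -- the last of these is the entire difficulty, and it is not implied by the preliminary lemmas you cite. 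The paper's two-level construction sidesteps all of this: one dichotomy on the switching time replaces the recursion.
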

\begin{proof}

We show that $ k = \frac{\ln(\nicefrac{1}{\epsilon})}{4\ln\ln(\nicefrac{1}{\epsilon})}$ is not sufficient (we assume for simplicity of the proof that $k$ is an integer). Consider an instance consisting of two types of rewards:
the first has a deterministic value of 1, the second is random and has value $1+\sqrt{\epsilon}$ with probability $1-p$ and value 0 with probability $p$, where $p = \frac{1}{k}$.

The expected value of the prophet is $1+\sqrt{\epsilon}-p\cdot \sqrt{\epsilon}$.
Thus, giving $(1-\epsilon)$-approximation means that the algorithm needs to have an expected reward of at least 
$$ (1-\epsilon)\cdot \E[\opt] = (1-\epsilon)(1+\sqrt{\epsilon}-p\cdot \sqrt{\epsilon}) > 1+\sqrt{\epsilon}-3p\cdot \sqrt{\epsilon} .$$

Since there are only three values in the supports of the distributions of the rewards, and one of them is $0$ (which an optimal algorithm can always discard), then by Observation~\ref{obs:monotone}, we can consider time-based threshold algorithms that use a non-increasing threshold $\ALG_t$ that are defined by a time $t \in [0,1]$, where until time $t$ they use accept only the value of $1+\sqrt{\epsilon}$, and from time $t$ they accept values in $\{1,1+\sqrt{\epsilon}\}$.
 Consider an algorithm $\ALG_t$ for some time $t$. We distinguish between two cases depending on the time $t$.
\paragraph{Case 1: $t\leq \frac{1}{2k}$.} 
In this case, we say that an order of arrival of $N$ rewards is ``bad'' if all rewards arrive after time $t$.
The probability of a ``bad'' order is 
\begin{equation}    
\Pr[ \mbox{``bad'' order} ] = \left(1-t\right)^{2k} \geq \left(1-\frac{1}{2k}\right)^{2k} \geq \frac{1}{4}, \label{eq:bad-non}
\end{equation}
where the last inequality holds for every $k\geq 1$.

Under a ``bad'' order, the algorithm selects the first non-zero value. Since the amount of rewards with a value of $1$ is at least the amount of rewards with a value of $1+\sqrt{\epsilon}$, and since under the event that the order is ``bad'', all of the rewards arrive uniformly at random, we get that 
\begin{equation}
    \Pr[\ALG_t =1 \mid \text{``bad'' order}] \geq \frac{1}{2}. \label{eq:1bad}
\end{equation}
Thus, the performance of  algorithm $\ALG_t$ is bounded by 
$$ \E[\ALG_t] \leq 1+\sqrt{\epsilon}-\sqrt{\epsilon}  \cdot\Pr[\ALG_t =1 \mid \text{``bad'' order} ] \cdot \Pr[ \mbox{``bad'' order} ]\stackrel{\eqref{eq:bad-non},\eqref{eq:1bad}}{\leq} 1+\sqrt{\epsilon}-\sqrt{\epsilon} \cdot \frac{1}{2} \cdot \frac{1}{4} = 1+\sqrt{\epsilon}-\frac{\sqrt{\epsilon}}{8}.$$

For sufficiently small $\epsilon$ (such that $k \geq 25$), it holds that $3p< \frac{1}{8}$, which means that $\ALG$ does not provide a $(1-\epsilon)$-approximation, which concludes the proof of this case.

\paragraph{Case 2: $t>\frac{1}{2k}$.}
In this case, we say that an order of arrival of $N$ rewards is ``bad'' if all of the deterministic rewards (rewards of the first type) arrive before time $t$.

Under ``bad'' order, if all the randomized rewards (rewards of the second type) have a value of $0$, then the algorithm gains a reward of $0$.
Thus, 
\begin{eqnarray}
     \E[\ALG_t]  & \leq & (1+\sqrt{\epsilon})(1-\Pr[\ALG_t =0]) \leq  (1+\sqrt{\epsilon})(1-\Pr[\ALG_t =0 \mid \text{``bad'' order} ] \cdot \Pr[ \mbox{``bad'' order} ] ) \nonumber \\ & \stackrel{}{\leq}& (1+\sqrt{\epsilon})(1 -p^k \cdot t^k    )\leq 1+\sqrt{\epsilon}- \left(\frac{1}{2k^2}\right)^k . \nonumber 
\end{eqnarray}
For sufficiently small $\epsilon$ (such that $k \geq 3$), it holds that $3p\sqrt{\epsilon}< \left(\frac{1}{2k^2}\right)^k $, which means that $\ALG$ does not provide a $(1-\epsilon)$-approximation.
This concludes the proof of the lemma.
\end{proof}

\begin{remark}
We note that Lemma~\ref{lem:blind-lower} can be adapted easily to the discrete-time model. Consider an instance where besides the two rewards defined in the instance described in the lemma, it has $m$ additional rewards that are deterministically 0. This does not change the distribution of the maximum value, however, the times in which the important rewards arrive behave similarly to the continuous-time model.
Again, we can assume that the discrete-time-based thresholds are non-increasing.
When $m$ goes to infinity, the performance of a discrete-time-based threshold algorithm that until time $t \in [(m+2)k]$ uses a threshold of $1+\sqrt{\epsilon}$  and after time $t$ uses a threshold of $1$, converges to the performance of the algorithm $\ALG_{t'}$ for $t'=\frac{t}{(m+2)k}$. 
    \label{rem:adapt} 
\end{remark}

\section{General Algorithms}\label{sec:general}
In this section, we prove our main result:
\begin{theorem}
The $(1-\epsilon)$-competition complexity of the class of general algorithms is $\Theta\left(\sqrt{\ln(\nicefrac{1}{\epsilon})}\right)$. \label{thm:general}
\end{theorem}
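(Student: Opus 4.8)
The statement splits into an $O(\sqrt{\ln(1/\epsilon)})$ upper bound and a matching $\Omega(\sqrt{\ln(1/\epsilon)})$ lower bound, which I would handle separately; write $L=\ln(1/\epsilon)$ and aim for $k=\Theta(\sqrt L)$. For the upper bound the plan is to construct, for every instance, a two-threshold adaptive algorithm and then prove the stochastic-dominance inequality $\Pr[\ALG>x]\ge(1-\epsilon)\Pr[\opt>x]$ for all $x\ge 0$, which integrates to $\E[\ALG]\ge(1-\epsilon)\E[\opt]$. Fix a constant $c$, let $\tau_1$ be the $c$-quantile of $\opt$ (so $\Pr[\opt\le\tau_1]=c$) and $\tau_2$ the $e^{-\sqrt L}$-quantile of $\opt$ (so $\tau_2\le\tau_1$). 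At every time $t$ the algorithm knows which of the $nk$ rewards have already arrived, so it can track $q(t)=\prod_{(i,j)\text{ not yet arrived}}F_i(\tau_2)$, the conditional probability that every remaining reward falls below $\tau_2$; since $q$ is non-decreasing with $q(0)=\Pr[\opt\le\tau_2]^{k}=e^{-k\sqrt L}\le\epsilon$ and $q(1)=1$, the last time $t^\star$ with $q(t^\star)\le\epsilon$ is well defined. The algorithm uses the high threshold $\tau_1$ on $[0,t^\star)$ and the low threshold $\tau_2$ on $[t^\star,1]$; all the adaptivity lives in the choice of $t^\star$.

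The first step after that is to show that $t^\star$ is bounded below by an absolute constant $t_0$ except with probability $e^{-\Omega(\sqrt L)}$: the event $q(t_0)>\epsilon$ is exactly $\sum_{(i,j)\text{ not arrived by }t_0}(-\ln F_i(\tau_2))<L$, a sum of independent nonnegative terms each at most $-\ln\Pr[\opt\le\tau_2]=\sqrt L$ with mean $(1-t_0)k\sqrt L$, so for $t_0$ a small enough constant the mean is $\Theta(L)$ above $L$ and Hoeffding's inequality (whose variance proxy $\sum(b_i-a_i)^2$ is at most $kL$) gives the claimed tail. With this in hand, the stochastic-dominance inequality is proved by three cases on $x$. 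If $x<\tau_2$, any selected value exceeds $x$, and the algorithm selects nothing only if every reward after $t^\star$ is below $\tau_2$, an event of probability $q(t^\star)\le\epsilon$, so $\Pr[\ALG>x]\ge1-\epsilon$. If $x\ge\tau_1$, then on $[0,t^\star)$ the algorithm behaves as a single-threshold-$\tau_1$ algorithm, and I would rerun the Case~1 argument of Theorem~\ref{thm:blind-threshold} --- Observation~\ref{obs:tt}, Corollary~\ref{cor:f1fn}, and the union bound $\Pr[\opt>x]\le\sum_i\Pr[V_i>x]$ --- over the interval $[0,t^\star)$, using that conditioned on $t^\star\ge t_0$ a constant-length head holding $k$ copies already produces a stopping-probability integral at least $1/k$, while the residual event $t^\star<t_0$ contributes only an $e^{-\Omega(\sqrt L)}$ loss (and $\opt$ is independent of the arrival times, so this conditioning does not distort $\Pr[\opt>x]$). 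If $\tau_2\le x<\tau_1$, I use that any selection made before $t^\star$ has value $>\tau_1>x$, and bound the probability of making such a selection from below by $\Pr[t^\star\ge t_0]\cdot\bigl(1-\prod_{(i,j)\text{ arrived by }t_0}F_i(\tau_1)\bigr)$; a second concentration argument shows the last factor is at least $1-\Pr[\opt\le\tau_2]$ up to lower order, hence $\Pr[\ALG>x]\ge(1-\epsilon)\Pr[\opt>\tau_2]\ge(1-\epsilon)\Pr[\opt>x]$.

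The calibration is exactly what forces the exponent $1/2$. The case $x<\tau_2$ needs $q(0)=\Pr[\opt\le\tau_2]^{k}\le\epsilon$, i.e.\ $k\cdot(\text{quantile exponent of }\tau_2)\gtrsim L$; the case $\tau_2\le x<\tau_1$ needs the failure probability of the $\tau_1$-phase, which behaves like a constant raised to the power $\Theta(k\,t_0)=\Theta(k)$, to be at most $\Pr[\opt\le\tau_2]$, i.e.\ $k\gtrsim(\text{quantile exponent of }\tau_2)$. Both hold simultaneously precisely when the quantile exponent is $\sqrt L$ and $k=\Theta(\sqrt L)$: a shallower choice of $\tau_2$ breaks the middle case, a deeper one breaks the bottom case.

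For the matching lower bound I would exhibit an explicit hard instance and argue that $k=o(\sqrt L)$ copies do not suffice for any general (fully adaptive) algorithm. Since general algorithms need only $\Theta(\ln\ln(1/\epsilon))$ copies in the i.i.d.\ and block-arrival models, the construction has to genuinely exploit the random order: I would use $\Theta(\sqrt L)$ heterogeneous, geometrically-scaled reward types whose high realizations are rare enough that, with only $o(\sqrt L)$ copies, the $nk$ realized rewards miss the top scales with probability bounded away from $0$ while the prophet --- who sees all realizations --- reaches them, and then argue via an exchange/induction argument that no online rule recovers the lost value, because in a random order the algorithm cannot tell whether to accept a moderate current reward or wait for a better one still ahead. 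I expect this lower bound to be the crux: it is where one must quantify precisely how the random arrival order degrades an adaptive algorithm --- an improvement from $\ln\ln(1/\epsilon)$ all the way down to $\sqrt{\ln(1/\epsilon)}$ relative to the i.i.d.\ model --- and the estimates are likely to route through Stirling's approximation (for the permutation/order-statistic probabilities) and Hoeffding's inequality (for concentration of the number of high realizations within a sub-interval).
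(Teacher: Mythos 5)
Your upper bound is essentially the paper's: the same two-threshold algorithm, with $\tau_1$ at a constant quantile, $\tau_2$ at the $e^{-\sqrt{\ln(1/\epsilon)}}$-quantile, the same tracking function $q(t)$ giving the adaptive switch time, the same three-case stochastic-dominance argument, and the same calibration of $k=\Theta(\sqrt{L})$ via Hoeffding and the "$q(0)\le\epsilon$ versus failure of the $\tau_1$-phase" tension. The details you defer (the Observation~\ref{obs:tij}-style monotonicity when removing a reward, the Markov step on $\E[T_{i,j}]$, the precise constant choices) are exactly what the paper fills in, and your outline would compile to that proof.

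The lower bound is where there is a real gap, and its mechanism is different from what you sketch. You propose $\Theta(\sqrt L)$ geometrically-scaled reward types whose top scales the algorithm "misses" but the prophet "reaches." That cannot be the right mechanism: the algorithm sees $k$ i.i.d.\ copies of each original reward while the prophet sees only the original $n$, so for any fixed value level the algorithm is strictly \emph{more} likely than the prophet to have a realization exceeding it. An information-deficit argument of that shape would, if anything, favor the algorithm. The paper's construction is much simpler and pins down the correct mechanism, which is an \emph{order dilemma} for a single moderate safe option versus a single risky option. Concretely: only two reward types, one deterministic at $1$, one equal to $1+\sqrt{\epsilon}$ with probability $1-p$ and $0$ with probability $p$, where $p=e^{-\sqrt{\ln(1/\epsilon)}}$; take $k=\sqrt{\ln(1/\epsilon)}/2$. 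The "bad" event is that all $k$ deterministic copies arrive before all $k$ random copies, which has probability $(k!)^2/(2k)!=\binom{2k}{k}^{-1}\ge 4^{-k}$ (this is where Stirling enters). Under this event, even a fully adaptive algorithm that knows the future order must decide, before seeing any random copy, whether to take a deterministic $1$ or wait; waiting is worth $(1+\sqrt{\epsilon})(1-p^k)=1-\epsilon<1$, so no online rule beats $1$ on this event. Hence $\E[\ALG]\le 1+\sqrt{\epsilon}-\sqrt{\epsilon}\cdot 4^{-k}$, while $(1-\epsilon)\E[\opt]>1+\sqrt{\epsilon}-3p\sqrt{\epsilon}$, and for the chosen $p,k$ one has $3p<4^{-k}$ once $\epsilon$ is small, giving the contradiction. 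You do gesture at the right intuition ("cannot tell whether to accept a moderate current reward or wait"), but without the two-type construction and the $\binom{2k}{k}^{-1}$ order-statistics bound, the sketch as written does not yield $\Omega(\sqrt{\ln(1/\epsilon)})$, and the information-deficit framing points in the wrong direction.
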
 
Our upper bound is given in Section~\ref{sec:alg-adapt}, and our lower bound is proved in Section~\ref{sec:ex-adapt}.

\subsection{General Algorithms: Algorithm}\label{sec:alg-adapt}
\begin{lemma}
The $(1-\epsilon)$-competition complexity of the class of general algorithms is $O\left(\sqrt{\ln(\nicefrac{1}{\epsilon})}\right)$.
\end{lemma}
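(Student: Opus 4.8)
The plan is to exhibit an explicit general (adaptive) algorithm that uses only two blind thresholds and an adaptively chosen switching time, and to prove via a stochastic-dominance argument that it achieves a $(1-\epsilon)$-approximation whenever $k = O(\sqrt{\ln(1/\epsilon)})$. Concretely, let $\tau_1$ be the threshold with $\Pr[\opt \leq \tau_1] = \alpha_1$ for a suitable constant $\alpha_1$ (e.g.\ $\tfrac12$), and let $\tau_2 < \tau_1$ be the threshold with $\Pr[\opt \leq \tau_2] = e^{-c\sqrt{\ln(1/\epsilon)}}$ for an appropriate constant $c$. The algorithm accepts any reward above $\tau_1$ from the start; it accepts rewards in $[\tau_2,\tau_1)$ only after a switching time $t^\star$, which is defined adaptively as the last time $t$ at which $q(t) \leq \epsilon$, where $q(t)$ is the (observable, since it depends on realized identities/times through conditioning) probability that \emph{every} reward arriving at or after time $t$ has value below $\tau_2$. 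I would first record the relevant facts about $\tau_1,\tau_2$ (their quantiles under $\opt$, hence under $\sqrt[n]{F_1\cdots F_n}$) and restate Lemmas~\ref{lem:f1f23}, \ref{lem:f1f2}, Corollary~\ref{cor:f1fn}, and Observation~\ref{obs:tt}, which are the workhorses.

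The core is again to prove $\Pr[\ALG > x] \geq (1-\epsilon)\Pr[\opt > x]$ for every $x$, split into three ranges. \textbf{Case $x < \tau_2$:} here it suffices that $\ALG$ selects \emph{something} with probability at least $1-\epsilon$; by definition of $t^\star$ and the fact that after $t^\star$ the algorithm accepts everything above $\tau_2$, the probability of not selecting is at most $q(t^\star) \leq \epsilon$ (one must check the edge case where $t^\star$ does not exist, i.e.\ $q(t)>\epsilon$ for all $t$, which forces a value above $\tau_2$ to appear early with good probability — handle by the quantile of $\tau_2$ being not too small). \textbf{Case $x > \tau_1$:} mimic Case~1 of Theorem~\ref{thm:blind-threshold}, but restricted to the interval $[0,t^\star]$ on which the algorithm behaves like the single-threshold-$\tau_1$ algorithm; the key sub-lemma is that with high probability $t^\star$ is at least some constant (say $\Omega(1/k)$ or larger), so that $\int_0^{t^\star}\Pr[T\geq t]\,dt \geq \tfrac1k$ up to constants after passing to the i.i.d.\ surrogate via Corollary~\ref{cor:f1fn} and the computation $\int (1-t+t(\tfrac12)^{1/n})^{nk}dt$. \textbf{Case $\tau_2 \leq x \leq \tau_1$:} the surviving value if we stop before $t^\star$ is automatically $\geq \tau_1 > x$, so it suffices that $\Pr[\text{stop before }t^\star] \geq (1-\epsilon)\Pr[\opt > x] $; since $\Pr[\opt > x] \leq \Pr[\opt > \tau_2] = 1 - e^{-c\sqrt{\ln(1/\epsilon)}}$, we need $\Pr[\text{stop before }t^\star]$ to be at least roughly $1 - e^{-c\sqrt{\ln(1/\epsilon)}}$; this is where the two ingredients combine — $t^\star$ is large with high probability (because $q$ decays fast once enough copies have been seen, using that $\tau_2$'s quantile is $e^{-c\sqrt{\ln(1/\epsilon)}}$ and each of the $nk$ i.i.d.\ surrogate rewards independently fails to exceed $\tau_2$ with that probability), and conditioned on $t^\star$ being large, the stopping probability before $t^\star$ with threshold $\tau_1$ is correspondingly large. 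Balancing $k \asymp \sqrt{\ln(1/\epsilon)}$ against the $e^{-c\sqrt{\ln(1/\epsilon)}}$ quantile is exactly what makes both bounds $\geq 1-\epsilon$ simultaneously.

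The main obstacle I expect is making the adaptive switching time $t^\star$ rigorous and getting a clean tail bound of the form $\Pr[t^\star < s_0] \leq \epsilon$ for a suitable $s_0$ bounded away from $0$ (or at least $\gtrsim$ some explicit function of $k$). The quantity $q(t)$ is a random function of the realized arrivals, so one has to argue about the distribution of the first time it drops below $\epsilon$; the natural route is to pass to the i.i.d.\ surrogate distribution $G := \sqrt[nk]{(F_1\cdots F_n)^k}$ via Corollary~\ref{cor:f1fn} (monotone coupling of stopping probabilities) and a matching lower-coupling via Lemma~\ref{lem:f1f2} for the ``select something'' event, then compute $q(t)$ for the surrogate as roughly $(\,t + (1-t)\cdot\text{something})^{N}$-type expressions, and use Hoeffding/standard concentration on the number of surrogate rewards arriving after time $t$ to control when the product of per-reward failure probabilities $\big(1 - (1-t_{i,j})(1-G(\tau_2))\big)$-type terms falls below $\epsilon$. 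I would isolate this as a standalone lemma (``with probability $\geq 1-\epsilon$, $t^\star \geq s_0$'') and then feed it into Cases~$x>\tau_1$ and $\tau_2\leq x\leq\tau_1$. A secondary subtlety is that general algorithms see identities and past arrivals, so Observation~\ref{obs:tt} and the independence-when-conditioning-on-$t_{i,j}$ steps used in Theorem~\ref{thm:blind-threshold} must be re-justified for the adaptive algorithm; this should go through because, \emph{given} the arrival configuration, the accept/reject rule before $t^\star$ is a fixed threshold and $t^\star$ itself is a stopping time with respect to the natural filtration, but it needs to be stated carefully.
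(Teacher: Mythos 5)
Your plan is essentially the paper's proof. The paper uses exactly your two-threshold design with an adaptive switch: $\tau_1$ at the $\tfrac34$-quantile of $\opt$, $\tau_2$ at the $e^{-\sqrt{\ln(1/\epsilon)}}$-quantile, the switching time $S=\max\{t:q(t)\le\epsilon\}$ with $q(t)$ defined identically, and the same three-case stochastic-dominance split at $\tau_2$ and $\tau_1$ with $k=\Theta(\sqrt{\ln(1/\epsilon)})$. The two subtleties you flag are also resolved exactly as you anticipate: for the concentration of the switching time the paper sets $w_i=-\ln\Pr[V_i\le\tau_2]$ (so $\sum_i w_i=\ell$ with $\ell=\sqrt{\ln(1/\epsilon)}$) and applies Hoeffding to $\sum_{i,j}w_i\indic{t_{i,j}<t_0}$; for the breakdown of Observation~\ref{obs:tt} under adaptivity, the paper replaces it by comparing to the run of the algorithm on the instance \emph{without} $V_{i,j}$, showing $T\ge\min\{t_{i,j},T_{i,j}\}$, and then bounding $\E[T_{i,j}]\ge\tfrac{2}{k}(\Pr[T'_{i,j}\ge\tfrac{2}{k}]-\Pr[S_{i,j}<\tfrac{2}{k}])$ via Markov. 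One small correction to your numbers: with $\alpha_1=\tfrac12$ the bound $\Pr[T'\ge 2/k]\to e^{-2\ln 2}=\tfrac14$ as $n\to\infty$, which after subtracting the tail on $S$ falls short of the needed $\tfrac12$; the paper's choice $\alpha_1=\tfrac34$ gives $\ge\tfrac{9}{16}$ and makes the arithmetic close.
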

\begin{proof}
We assume for simplicity of the proof that $\epsilon$ is of the form $e^{-\ell^2}$ for some integer $\ell\geq 2$.
We show that $k = 8\sqrt{\ln(\nicefrac{1}{\epsilon})} = 8\ell$ is sufficient.
Let $\tau_1,\tau_2$ be two threshold satisfying that $\Pr\left[\opt \leq \tau_1 \right] = \frac{3}{4}$, and $\Pr\left[\opt \leq \tau_2 \right] = e^{-\ell}$.
Let $q_i= \Pr[V_i \leq \tau_2]$.
We define  $q(t)$ for every $t\in [0,1]$  to be the probability of not selecting a value if starting at time $t$ and using the single-threshold $\tau_2$. I.e., $$ q(t) = \prod_{i\in [n],j\in [k]} \left(1-(1-q_i)\cdot {\indic{t_{i,j}\geq t}}\right) .$$
We note that for every realization of $t_{i,j}$'s, the function of $q(t)$ is weakly increasing in $t$, and can only strictly increase immediately after some time $t_{i,j}$, thus it is right continuous.
We define $S$ to be the random variable of the last time where $q(t)$ is at most $\epsilon$, i.e., $S=\max\{t\in [0,1] \mid q(t) \leq \epsilon\}$ (the maximum is well-defined since $q$ is increasing and right-continuous, with $q(0) = \prod_{i\in[n]}q_i^k = e^{-k\ell} = \epsilon^8 \leq \epsilon$, which implies that the maximum is over a non-empty set).
Note that $S$ is only dependent on the values of $\vec{t}$, and is independent from the values of $\vec{V}$. 

Our algorithm works as follows: 
\begin{enumerate}
    \item For time $t \in [0,S)$, use threshold $\tau_1$.
    \item  For time $t \in [S,1]$, use threshold $\tau_2$.
\end{enumerate}
We note that the algorithm is well-defined, since at time $t$, it can be deduced whether $t<S$ or $t \geq S$ since it is a function of the set of rewards that arrived so far.  

Throughout the proof, for indexes $i,j$ we denote the variable $T_{i,j}$  (respectively, $S_{i,j}$) as the stopping time (respectively, the time of switching thresholds) of our algorithm on an instance without $V_{i,j}$. 
We also define $T'$ as the stopping time of the algorithm that uses only threshold $\tau_1$ for all $t\in[0,1]$.
We define respectively, $T_{i,j}'$ as the stopping time of the single-threshold algorithm (with threshold $\tau_1$) on an instance without $V_{i,j}$. 
\begin{observation}
For every $i\in [n], j\in[k]$, and every realization of $\vec{V}$, $\vec{t}$, it holds that $ T \geq \min\{t_{i,j},T_{i,j}\}$. \label{obs:tij}
\end{observation}
\begin{proof}
For every realization of $\vec{t}$ it holds that $S_{i,j} \leq S$, therefore, when applying our algorithm on the instance without $V_{i,j}$, it always uses a threshold lower or equal to the threshold used in the instance with $V_{i,j}$. Thus, $T<T_{i,j}$ can only happen if $V_{i,j}$ is selected, for which  $T=t_{i,j}$.
\end{proof}

We next show that for every value of $x$, it holds that $$ \Pr[\ALG >x] \geq (1-\epsilon) \Pr[\opt >x].$$

\paragraph{Case 1: $x\geq \tau_1$.}
Let $p_i = \Pr[V_i \leq x]$. 
By the Union bound, it holds that 
\begin{equation}
    \Pr[\opt > x]  \leq \sum_{i \in [n]}\Pr[V_i>x] = \sum_{i\in [n]} (1-p_i). \label{eq:opt3-continuous}
\end{equation}
On the other hand
\begin{eqnarray}
\Pr[\ALG > x] & = & \sum_{i\in[n],j\in[k]} \Pr[V_{i,j}  >x \wedge t_{i,j} \leq  T]  = \sum_{i\in[n],j\in[k]} \Pr[V_{i,j}  >x ] \cdot \Pr[t_{i,j} \leq  T] \nonumber \\    
& \geq  & \sum_{i\in[n],j\in[k]} \Pr[V_{i,j}  >x ] \cdot \Pr[ t_{i,j} \leq \min\{t_{i,j},T_{i,j}\}]  \nonumber \\ 
& =  & \sum_{i\in[n],j\in[k]} \Pr[V_{i,j}  >x] \cdot \Pr[t_{i,j} \leq T_{i,j}] = \sum_{i\in[n],j\in[k]} (1-p_i) \cdot \E[T_{i,j}] , \label{eq:algxc1}
\end{eqnarray}
where the first equality is since if the algorithm reaches some  $t_{i,j}$, and  $V_{i,j}>x\geq \tau_1$, then the algorithm returns this value; the second equality is since whether time $t_{i,j}$ is reached is independent of the value of $V_{i,j}$; 
the inequality is by Observation~\ref{obs:tij}; the last equality is since $t_{i,j}$ and  $T_{i,j}$ are independent and $t_{i,j}$ is distributed uniformly over $[0,1]$.

We can bound the expected value of $T_{i,j}$ by
\begin{equation}
    \E[T_{i,j}]  \geq   \E[\min\{ T_{i,j}', S_{i,j}\}]  \geq  \frac{2}{k} \Pr\left[\min\{ T_{i,j}', S_{i,j}\} \geq \frac{2}{k}\right] \geq \frac{2}{k} \left(  \Pr\left[T_{i,j}' \geq  \frac{2}{k}\right] -\Pr\left[S_{i,j} < \frac{2}{k}\right]  \right)  \label{eq:tij1},
\end{equation}
where the first inequality is since $T_{i,j} < S_{i,j} $ implies that $T_{i,j}=T'_{i,j}$, and the second inequality is by Markov's Inequality.

We next bound separately each of the terms $\Pr\left[T_{i,j}' \geq \frac{2}{k}\right] $ and $\Pr\left[S_{i,j} < \frac{2}{k}\right]$. It holds that  

\begin{eqnarray}
\Pr\left[T_{i,j}'   \geq  \frac{2}{k}\right] & \geq & \Pr\left[T' \geq \frac{2}{k}\right] =1-p_{\tau_1}(\frac{2}{k},\underbrace{F_1,\ldots,F_1}_{k \text{ times}},\ldots,\underbrace{F_{n},\ldots,F_n}_{k \text{ times}})  \nonumber \\ 
& \geq & 1- p_{\tau_1}(\frac{2}{k},\underbrace{\sqrt[n]{F_1 \cdot \ldots \cdot F_n},\ldots,\sqrt[n]{F_1 \cdot \ldots \cdot F_n}}_{ N \text{ times}})  \nonumber \\ 
&=& \left(1-p_{\tau_1}(\frac{2}{k},\sqrt[n]{F_1 \cdot \ldots \cdot F_n})\right)^{nk} = \left(1-\frac{2}{k}\left(1-\sqrt[n]{\frac{3}{4}}\right) \right)^{nk} \geq  \frac{9}{16}, \label{eq:tij2}
\end{eqnarray}
where the first inequality is since for the single-threshold algorithm, adding $V_{i,j}$ at time $t_{i,j}$ can only decrease the stopping time; the second inequality is by Corollary~\ref{cor:f1fn}; the last equality is by the definition of $\tau_1$; and the last inequality holds for every $n,k$.

For bounding $\Pr\left[S_{i,j} < \frac{2}{k}\right]$, we define $w_i = -\ln{\left(q_i\right)}$.
By definition of $\tau_2$, we know that $\Pr[V_i \leq \tau_2] \geq \Pr[\opt \leq \tau_2] >0$, so $w_i$ is well-defined. Moreover, since $\Pr[\opt \leq \tau_2] = \prod_{i \in [n]} \Pr[V_i \leq \tau_2]$, we get that $$\sum_{i\in[n]} w_i =-\sum_{i \in [n]} \ln(\Pr[V_i \leq \tau_2]) = -\ln(\Pr[\opt \leq \tau_2]) = \ell .$$ 
Given a realization of $\vec{t}$, we denote by $A(\vec{t})$ the set of variables $V_{i',j'}$ arriving at time $t_{i',j'} \geq \frac{2}{k}$, i.e., $$A(\vec{t}) = \left\{(i',j')\in [n] \times [k]  \mid t_{i',j'} \geq \frac{2}{k} \right\}.$$
We bound the probability of $S_{i,j} < \frac{2}{k}$ by 
\begin{eqnarray}
    \Pr\left[S_{i,j} < \frac{2}{k}\right]  & = & \Pr\left[
    \prod_{(i',j')\in A(\vec{t}) \setminus \{(i,j)\}}q_{i'} > \epsilon \right] = \Pr\left[\sum_{(i',j') \in A(\vec{t})\setminus \{(i,j)\}} w_{i'}  < \ell^2  \right] \nonumber \\
&\leq&    \Pr\left[\sum_{(i',j') \in A(\vec{t})} w_{i'}  < \ell^2  +w_i \right]
\nonumber \\
    & = &  \Pr\left[\sum_{i'\in[n],j'\in[k]} w_{i'} \cdot \indic{t_{i',j'} \geq \frac{2}{k}} < \ell^2   +w_i\right]
    \nonumber \\ &\leq  &  \Pr\left[\sum_{i'\in[n],j'\in [k]} w_{i'} \cdot \indic{t_{i',j'} < \frac{2}{k}} > 7\ell^2 - \ell \right] \nonumber \\ 
    &  \leq &  exp\left({-\frac{2\left(7\ell^2 -3\ell \right)^2}{\sum_{i'\in[n],j'\in[k]}{w_{i'}^2}}}\right) \leq   
    exp\left({-\frac{2\left(4\ell^2\right)^2}{k\cdot (\sum_{i'\in [n]}w_{i'})^2}}\right) = e^{-4\ell} \leq \frac{1}{16} , \label{eq:sij}
\end{eqnarray}
where the first equality is by the definition of $S_{i,j}$; the second equality is by taking $\log$ over both sides and rearranging; the first inequality is by adding the term corresponding to $(i,j)$ which is bounded by $w_{i}$; the third equality is by the definition of $A(\vec{t})$; the second inequality is since $\sum_{i'\in[n],j'\in[k]}w_{i'} =k\ell = 8\ell^2$, and since $w_i \leq \ell$; the third inequality is by Hoeffding's Inequality (where the term $3\ell$ is the expectation of $\sum_{i'\in[n],j'\in [k]} w_{i'} \cdot \indic{t_{i',j'} < \frac{2}{k}}$); the fourth inequality holds since $\ell\geq 1$ and since $w_{i'}$'s are non-negative; the last inequality holds since $\ell \geq 1$. 

Combining everything together, we get that \begin{eqnarray}
    \Pr[\ALG > x] & \stackrel{\eqref{eq:algxc1}}{\geq}&  \sum_{i\in[n],j\in[k]} (1-p_i) \cdot \E[T_{i,j}] \stackrel{\eqref{eq:tij1}}{\geq } \sum_{i\in[n],j\in[k]} (1-p_i) \cdot \frac{2}{k} \left(  \Pr\left[T_{i,j}' \geq  \frac{2}{k}\right] -\Pr\left[S_{i,j} \leq \frac{2}{k}\right] \right) \nonumber \\ 
    & \stackrel{\eqref{eq:tij2},\eqref{eq:sij}}{\geq} & \sum_{i\in[n],j\in[k]} (1-p_i) \cdot \frac{2}{k} \left(  \frac{9}{16} -\frac{1}{16} \right) = \sum_{i \in [n]} (1-p_i) \stackrel{\eqref{eq:opt3-continuous}}{\geq} \Pr[\opt > x], \nonumber
    \end{eqnarray}
    which concludes the proof of the case.

\paragraph{Case 2: $x\in [\tau_2,\tau_1)$.}
We first bound the probability of $\opt$ selecting a value larger than $x$ by
\begin{equation}
     \Pr[\opt>x] \leq \Pr[\opt>\tau_2] \leq 1-e^{-\ell} . \label{eq:optc2}
     \end{equation} 
On  the other hand
\begin{equation}
    \Pr[\ALG>x] \geq \Pr[T<S] = \Pr[T'<S] \geq \Pr\left[T' < \frac{1}{2} \leq S\right] \geq 1- \Pr\left[T' \geq  \frac{1}{2}\right] - \Pr\left[S < \frac{1}{2}\right] . \label{eq:algxc2}
\end{equation}
We can bound the probability of $T'\geq  \frac{1}{2}$ by
\begin{eqnarray}    
 \Pr\left[T' \geq \frac{1}{2}\right] & =& 1- p_{\tau_1}(\frac{1}{2},\underbrace{F_1,\ldots,F_1}_{k \text{ times}},\ldots,\underbrace{F_{n},\ldots,F_n}_{k \text{ times}}) \leq  1 - p_{\tau_1}(\frac{1}{2},\underbrace{F_1\cdot \ldots \cdot F_n,\ldots,F_1\cdot\ldots\cdot F_{n}}_{k \text{ times}})  \nonumber \\ 
 & = & (1-p_{\tau_1}(\frac{1}{2},F_1\cdot \ldots \cdot F_n))^{k}  = \left(1-\frac{1}{2}\left(1-\frac{3}{4}\right)\right)^{k}  =  \left(\frac{7}{8}\right)^{8\ell} \label{eq:t13},
 \end{eqnarray}
 where the inequality is by Lemma~\ref{lem:f1f23}.

Given a realization of $\vec{t}$, we denote by $A(\vec{t})$ the set of variables $V_{i,j}$ arriving at time $t_{i,j} \geq \frac{1}{2}$, i.e., $$A(\vec{t}) = \left\{(i,j)\in [n] \times [k]  \mid t_{i,j} \geq \frac{1}{2} \right\}.$$
We bound the probability of $S < \frac{1}{2}$ by 
\begin{eqnarray}
    \Pr\left[S < \frac{1}{2}\right]  & = & \Pr\left[
    \prod_{(i,j)\in A(\vec{t})}q_{i} > \epsilon \right] = \Pr\left[\sum_{(i,j) \in A(\vec{t})} w_{i}  < \ell^2  \right] \nonumber \\
    & = &  \Pr\left[\sum_{i\in[n],j\in[k]} w_{i} \cdot \indic{t_{i,j} \geq \frac{1}{2}} < \ell^2  \right]
    \nonumber \\ &= &  \Pr\left[\sum_{i\in[n],j\in [k]} w_{i} \cdot \indic{t_{i,j}  < \frac{1}{2}} > 7\ell^2 \right] \nonumber \\ 
    &  \leq &  exp\left({-\frac{2\left(7\ell^2 - 4\ell^2\right)^2}{\sum_{i'\in[n],j'\in[k]}{w_{i'}^2}}}\right) \leq   
    exp\left({-\frac{2\left(3\ell^2\right)^2}{k\cdot (\sum_{i'\in [n]}w_{i'})^2}}\right) =  e^{-\frac{9}{4}\ell} , \label{eq:sijnew}
\end{eqnarray}
where the first inequality is by Hoeffding's inequality (where the term $4\ell^2$ is the expectation of $\sum_{i\in[n],j\in [k]} w_{i} \cdot \indic{t_{i,j} < \frac{1}{2}}$), and the second inequality is since $w_i$ are non-negative.

Overall, we get that $$\Pr[\ALG>x]  \stackrel{\eqref{eq:algxc2}}{\geq} 1- \Pr\left[T' \geq \frac{1}{2}\right] - \Pr\left[S <  \frac{1}{2}\right]  \stackrel{\eqref{eq:t13},\eqref{eq:sijnew}}{\geq }    1- \left(\frac{7}{8}\right)^{8\ell} - e^{-\frac{9}{4}\ell} \geq  1-e^{-\ell} \stackrel{\eqref{eq:optc2}}{\geq} \Pr[\opt>x], $$
where the third inequality holds for every $\ell \geq 2$. This concludes the proof of this case.

\paragraph{Case 3: $x<\tau_2$.}
It holds that $$ \Pr[\ALG>x] \geq \Pr[\exists{(i,j) \in[n] \times [k]}: V_{i,j} > \tau_2  \wedge t_{i,j} \geq S ] = 1- q(S) \geq  1-\epsilon \geq (1-\epsilon)\Pr[\opt>x] ,$$
where the first inequality is since if there exists such $V_{i,j}$, then a value of at least $\tau_2>x$ is selected; the first equality is by definition of $q(t)$; the second inequality is by definition of $S$; and the last inequality is since a probability is bounded by $1$.
This concludes the proof of the lemma.
\end{proof}
\subsection{General Algorithms: Hardness}\label{sec:ex-adapt}
In this section, we show a lower bound on the competition complexity of general algorithms. We note that this hardness result applies even for algorithms that know the random arrival order in advance. 
\begin{lemma}
The $(1-\epsilon)$-competition complexity of the class of general algorithms is $\Omega\left(\sqrt{\ln(\nicefrac{1}{\epsilon})}\right)$.
\end{lemma}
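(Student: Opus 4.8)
The plan is to exhibit a family of instances on which \emph{no} general algorithm can obtain a $(1-\epsilon)$-approximation using fewer than $\Omega(\sqrt{\ln(1/\epsilon)})$ copies. The natural template is a ``staircase'' of deterministic-like rewards: let $\ell = \sqrt{\ln(1/\epsilon)}$ and take $\ell$ reward types, where type $m$ (for $m = 1,\dots,\ell$) has value roughly $1 + m\delta$ for a tiny increment $\delta$, but is present (non-zero) only with some small probability $p_m$; otherwise it is $0$. The probabilities are tuned so that $\E[\opt]$ is dominated by the top types that do appear, but each individual high type is so rare that, even with $k$ copies, the algorithm is unlikely to ever see one. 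Because there are $\ell$ levels and the ``penalty'' for missing a level compounds multiplicatively in $k$, one needs $k$ comparable to $\ell = \sqrt{\ln(1/\epsilon)}$ before the algorithm can reliably climb the staircase. The key point is to choose the $p_m$'s geometrically (something like $p_m \approx e^{-\Theta(\ell)}$ at the top, increasing as $m$ decreases) so that the expected number of copies of type $m$ seen is $k p_m$, and for this to exceed $1$ across all the top levels simultaneously one must have $k = \Omega(\ell)$.

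First I would set up the instance precisely and compute $\E[\opt]$: since each type appears independently, $\E[\opt] \ge 1 + \delta \cdot \E[\text{highest present level}]$, and with the chosen $p_m$ the highest present level is $\Omega(\ell)$ in expectation (each additional level costs a factor in probability but there are $\ell$ of them, which is the sweet spot). Then I would argue that a $(1-\epsilon)$-approximation forces the algorithm to secure, with probability $1-O(\epsilon)$, a reward at level $\Omega(\ell)$ or higher. Next, I would lower-bound the probability of the ``bad'' event that the algorithm fails: condition on an arrival order / realization in which all copies of the top $\Theta(\ell)$ types either don't appear or arrive ``too late'' relative to a flood of low-value copies. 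Crucially, because the algorithm is adaptive and may even know the arrival order (as the lemma statement emphasizes), I cannot rely on a simple time-threshold argument as in Lemma~\ref{lem:blind-lower}; instead the obstruction must be information-theoretic — the algorithm simply never observes a high type, so whatever it selects is capped. The cleanest way is: with probability $\ge c$ (a constant, or $\ge \epsilon^{o(1)}$), none of the $k$ copies of any of the top $\Theta(\ell)$ types realizes to a non-zero value, because $\prod_{m \text{ top}} (1-p_m)^k \ge$ (something like) $e^{-O(k \ell \cdot e^{-\Theta(\ell)})}$, which for $k = o(\ell)$ stays bounded away from decaying; on this event the algorithm's reward is at most $1 + \Theta(\ell)\delta \cdot(\text{a value strictly below what}\ \opt\ \text{achieves})$, costing a multiplicative loss $> \epsilon$.

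I expect the main obstacle to be calibrating the three coupled parameters — the number of levels $\ell$, the increment $\delta$, and the appearance probabilities $p_m$ — so that simultaneously (i) $\E[\opt]$ is genuinely pulled up to level $\Theta(\ell)$ (this needs $\sum_m p_m$ and the tail structure to be right), (ii) the failure event for $k = o(\sqrt{\ln(1/\epsilon)})$ copies has probability large enough that the incurred loss exceeds $\epsilon \cdot \E[\opt]$, and (iii) the $0$-values don't let the algorithm ``wait cheaply'' — i.e. discarding zeros is free, so the real competition is only among the $\ell$ positive levels, and the adaptive algorithm's power reduces to ``which positive level to hold out for.'' A secondary subtlety is handling the claim that the bound holds even against algorithms knowing the order: here I would note that knowing the arrival times doesn't help because the obstruction is the \emph{realization} (a high type being $0$), not the \emph{timing}; so the same computation of $\Pr[\text{all top copies are }0]$ applies verbatim. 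I would then conclude by choosing $\epsilon$ small enough (equivalently $\ell$ large enough) that all the ``for sufficiently small $\epsilon$'' slack inequalities hold, exactly as in the proof of Lemma~\ref{lem:blind-lower}.
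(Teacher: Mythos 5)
Your approach has a genuine, unrecoverable gap, and it lies precisely in the step you label a ``secondary subtlety.'' You assert that the obstruction is ``information-theoretic --- the algorithm simply never observes a high type,'' and that ``knowing the arrival times doesn't help because the obstruction is the realization, not the timing.'' This is exactly backwards, and the argument you sketch cannot produce any lower bound at all.

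Here is why a purely realization-based obstruction is a dead end. For any $k\geq 1$, the random variable $\max_{i,j}V_{i,j}$ (the best value the algorithm could ever hope to select) stochastically dominates $\opt=\max_i V_i$, since the $k$ copies contain a fresh sample of the original instance. Hence $\E[\max_{i,j}V_{i,j}]\geq \E[\opt]$ always. Your argument upper-bounds $\E[\ALG]$ only by the value of the observed maximum on a ``bad realization'' event and by the global maximum otherwise; such a bound can never fall below $(1-\epsilon)\E[\opt]$, because it also upper-bounds $\E[\max_{i,j}V_{i,j}]$, which is already $\geq\E[\opt]$. Any conditioning on a low realization for the algorithm's copies applies with at least as much force to a single copy, so the prophet is hurt at least as much; there is no gap to exploit. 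A valid lower bound must show the algorithm cannot always pick its own observed maximum, and that is intrinsically an \emph{ordering} phenomenon: the algorithm must commit before knowing what comes later.

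The paper's proof makes this explicit with a much simpler, two-type instance: a deterministic reward of value $1$ and a random reward that is $1+\sqrt\epsilon$ with probability $1-p$ and $0$ with probability $p$, with $p=e^{-\sqrt{\ln(1/\epsilon)}}$. The hard event is an \emph{ordering} event (all $k$ deterministic copies arrive before all $k$ random copies); its probability is $\frac{(k!)^2}{(2k)!}\geq 4^{-k}$. Under that order the algorithm faces a genuine commit-or-gamble dilemma: if it takes a deterministic reward it is capped at $1$, and if it gambles on the random rewards it earns at most $(1+\sqrt\epsilon)(1-p^k)=1-\epsilon<1$. This loss of $\sqrt\epsilon\cdot 4^{-k}$ is what beats the prophet's slack of roughly $3p\sqrt\epsilon$, giving the $\Omega(\sqrt{\ln(1/\epsilon)})$ bound. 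Nothing in your staircase construction forces such a commitment: an online algorithm can simply decline zeros for free and, absent an ordering constraint, would collect its observed maximum. I would also note that the multi-level staircase adds substantial calibration burden (you list three coupled parameters) for no gain; the paper's instance shows a single increment $\sqrt\epsilon$ atop a single base level already suffices once the ordering argument is in place.
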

\begin{proof}
Consider an instance consisting of two types of rewards:
the first has a deterministic value of 1, the second is random and has value $1+\sqrt{\epsilon}$ with probability $1-p$ and value 0 with probability $p$, where $p = e^{-\sqrt{\ln(1/\epsilon)}}$.
We show that $ k = \frac{\sqrt{\ln(1/\epsilon)}}{2}$ is not sufficient.

We say that an order of arrival of $N$ rewards is ``bad'' if all the deterministic rewards (rewards of the first type) arrive before all of the randomized rewards (rewards of the second type).
Under such an order, an online algorithm cannot obtain an expected value of more than $1$.
This is since the expected value of the algorithm when discarding all of the deterministic rewards is  $(1+\sqrt{\epsilon}) \cdot (1-p^k) = (1+\sqrt{\epsilon}) \cdot (1-\sqrt{\epsilon}) = 1-\epsilon <1$ (which implies that it is better to select one of the deterministic rewards).

Under non-bad order, the algorithm can obtain at most $1+a$, which is the maximum value in the support of the distributions.

The probability of a ``bad'' order is 
\begin{equation}    
\Pr[ \mbox{``bad'' order} ] = \frac{(k!)^2}{(2k)!}  \geq \frac{2\pi k \left(\frac{k}{e}\right)^{2k}}{e\sqrt{2k} \left(\frac{2k}{e}\right)^{2k}} \geq \frac{1}{4^k}, \label{eq:bad}
\end{equation}
where the first inequality is by Stirling approximation.

Thus, the performance of any algorithm is bounded by 
$$ \E[\ALG] \leq \Pr[\mbox{``bad'' order} ] \cdot 1 + \Pr[\mbox{not ``bad'' order}] \cdot (1+\sqrt{\epsilon}) \stackrel{\eqref{eq:bad}}{\leq} 1+\sqrt{\epsilon}-\frac{\sqrt{\epsilon}}{4^k}. $$
On the other hand, the expected value of the prophet is $\E[\opt]=1+\sqrt{\epsilon}-p\cdot\sqrt{\epsilon}$.

However, $$ (1-\epsilon) \cdot\E[\opt]=(1-\epsilon)(1+\sqrt{\epsilon}-p\cdot \sqrt{\epsilon})  >  1+\sqrt{\epsilon}-3p\cdot \sqrt{\epsilon} ,$$ while for sufficiently small $\epsilon$ (such that $k\geq 4$), it holds that $3p< \frac{1}{4^k}$, which means that $\ALG$ does not provide a $(1-\epsilon)$-approximation, which concludes the proof.
\end{proof}

\bibliography{bib}
\bibliographystyle{abbrvnat}
\appendix

\section{Proof of Case 1 of Lemma~\ref{lem:blind-upper}}\label{app:case1}
Let $p_i = \Pr[V_i \leq x]$. 
By the Union bound, it holds that 
\begin{equation}
    \Pr[\opt > x]  \leq \sum_{i \in [n]}\Pr[V_i>x] = \sum_{i\in [n]} (1-p_i). \label{eq:opt2-continuous-app}
\end{equation}
On the other hand, we note that since $x$ is above the median of $\opt$ and since $\tau(t)$ is the median of $\opt$, then $x>\tau(t)$. Thus, if the algorithm reaches time $t_{i,j}$ where $V_{i,j}>x$, then the algorithm selects a value larger than $x$.
Thus, by analyzing the probability that the algorithm selects a value larger than $x$ up to time $\frac{2}{k}$, we get
\begin{eqnarray}
\Pr[\ALG > x] &\geq& \int_{0}^{\frac{2}{k}} \sum_{i \in [n], j\in [k]} \Pr[T \geq t_{i,j} \wedge V_{i,j} > x \mid t_{i,j} = t]dt \nonumber \\
&=& \int_{0}^{\frac{2}{k}} \sum_{i \in [n], j\in [k]} \Pr[T \geq t_{i,j} \mid t_{i,j} = t] \cdot \Pr[V_{i,j} > x] dt \nonumber \\
&\geq& k \cdot \sum_{i \in [n]} (1 - p_i) \cdot \int_{0}^{\frac{2}{k}} \Pr[T \geq t] dt \nonumber \\ 
&=& k \cdot \sum_{i \in [n]} (1 - p_i) \cdot \int_{0}^{\frac{2}{k}} \left( 1 - p_\tau(t,\underbrace{F_1,\ldots,F_1}_{k \text{ times}},\ldots,\underbrace{F_{n},\ldots,F_n}_{k \text{ times}}) \right) dt,  \label{eq:algx-app} 
\end{eqnarray}
where the first inequality is since if $t_{i,j}$ is reached and  $V_{i,j} > x$, then reward $V_{i,j}$ is selected; the first equality is since the events of time $t_{i,j}$ is reached and $V_{i,j} > x$ are independent when fixing $t_{i,j} = t$; the second inequality holds by Observation~\ref{obs:tt} and by the definition of $p_i$'s; the second equality is by definition of $p_{\tau}$.

We also know that
\begin{eqnarray}
\int_{0}^{\frac{2}{k}} \left( 1 - p_\tau(t,\underbrace{F_1,\ldots,F_1}_{k \text{ times}},\ldots,\underbrace{F_{n},\ldots,F_n}_{k \text{ times}}) \right) dt & \geq &
 \int_{0}^{\frac{2}{k}} \left( 1 - p_{\tau}(t,\underbrace{\sqrt[n]{F_1 \cdot \ldots \cdot F_n},\ldots,\sqrt[n]{F_1 \cdot \ldots \cdot F_n}}_{ nk \text{ times}}) \right)dt \nonumber  \\ &=&  \int_{0}^{\frac{2}{k}} \left( 1 - p_{\tau}(t,\sqrt[n]{F_1 \cdot \ldots \cdot F_n}) \right)^{nk} dt \nonumber \\
 &=&  \int_{0}^{\frac{2}{k}} \left( 1-t+t \cdot \sqrt[n]{F_1 \cdot \ldots \cdot F_n(\tau(t))}  \right)^{nk} dt \nonumber \\
&=&  \int_{0}^{\frac{2}{k}} \left( 1-t+t \cdot \sqrt[n]{\frac{1}{2}} \right)^{nk} dt \geq  \int_{0}^{\frac{2}{k}} 2^{-tk} dt
 \nonumber \\ 
& =&\left.-\frac{2^{-tk}}{\ln(2)\cdot k} \right|_{0}^{\frac{2}{k}}  = \frac{1-\frac{1}{4}}{\ln(2) \cdot k}   \geq \frac{1}{k}, \label{eq:int-app}
\end{eqnarray}
where the first inequality holds by Corollary~\ref{cor:f1fn}; the first and second equalities are by definition of $p_{\tau}$; the third equality is since for $t\in[0,\frac{2}{k}],$ it holds that $F_1 \cdot \ldots \cdot F_n(\tau(t)) = \frac{1}{2}$; the second inequality holds for every $n$.

Overall, we get that
$$ \Pr[\ALG >x  ] \stackrel{\eqref{eq:algx-app}}{\geq} k \cdot \sum_{i \in [n]} (1 - p_i) \cdot \int_{0}^{\frac{2}{k}} \left( 1 - p_\tau(t,\underbrace{F_1,\ldots,F_1}_{k \text{ times}},\ldots,\underbrace{F_{n},\ldots,F_n}_{k \text{ times}}) \right) dt \stackrel{\eqref{eq:int-app}}{\geq} \sum_{i \in [n]} (1 - p_i) \stackrel{\eqref{eq:opt2-continuous-app}}{\geq}  \Pr[\opt > x],$$
which concludes the proof of this case.

\section{Proofs of Claims in the Preliminaries}\label{app:prelim}

\lemff*
\begin{proof}
We sample values $V_1, V_2$ and times $t_1, t_2$ for them. 
Let $i = \arg\max_{j \in [1,2]} V_j$.
\begin{eqnarray}
p_{\tau}(t,F_1,F_2) &=& 1-(1- p_{\tau}(t,F_1))\cdot(1- p_{\tau}(t,F_2)) \nonumber\\ &=& p_{\tau}(t,F_1) + p_{\tau}(t,F_2) - p_{\tau}(t,F_1) \cdot p_{\tau}(t,F_2) \nonumber\\&=& \Pr[t_1< t \wedge  V_1 > \tau(t_1) ] + \Pr[t_2< t \wedge  V_2 > \tau(t_2) ] \nonumber \\&-& \Pr[t_1< t \wedge  V_1 > \tau(t_1) ] \cdot \Pr[t_2< t \wedge  V_2 > \tau(t_2) ] \nonumber\\&\geq& \Pr[t_i< t \wedge  V_i > \tau(t_i) ] = p_{\tau}(t,F_1\cdot F_2). \nonumber
\end{eqnarray}
This concludes the proof of the lemma.
\end{proof}

\lemfff*
\begin{proof}
We prove the above inequality for every realization of the multiset $\{t_1,t_2\} = \{r,s\}$, where $r \leq  s$ (i.e., we sample the pair $(r,s)$ of arrival times of $V_1,V_2$, but not the corresponding matching between $V_1,V_2$ to $r,s$).   
Denote the event of the pair of times $t_1, t_2$ as a multiset being $\{r, s\}$ as $\mathcal{E}_{r,s}$.
We distinguish between the following cases:

\paragraph{Case 1: $r,s\geq t$.} The probability that the algorithm defined by $\tau$ stops before time $t$ is $0$ for both sides of the equation.

\paragraph{Case 2: $r<t \leq s$.} 
It holds that 
\begin{eqnarray}
0 \leq \frac{1}{2} \left(\sqrt{F_1(\tau(r))} - \sqrt{F_2(\tau(r))}\right)^2 = \frac{1}{2} F_1(\tau(r)) + \frac{1}{2} F_2(\tau(r)) - \sqrt{F_1 \cdot F_2(\tau(r))}. \label{eq:square1}
\end{eqnarray}
Thus we have
\begin{eqnarray}
p_{\tau}(t,F_1,F_2 \mid \mathcal{E}_{r,s})  &=& \frac{1}{2} (1 - F_1(\tau(r))) + \frac{1}{2} (1 - F_2(\tau(r))) \nonumber \\ &=& 1 - \frac{1}{2} F_1(\tau(r)) - \frac{1}{2} F_2(\tau(r)) \nonumber \\ &\stackrel{\eqref{eq:square1}}{\leq}& 1 - \sqrt{F_1 \cdot F_2(\tau(r))} \nonumber \\ &=& p_{\tau}(t,\sqrt{F_1\cdot F_2},\sqrt{F_1\cdot F_2} \mid \mathcal{E}_{r,s}). \nonumber
\end{eqnarray}

\paragraph{Case 3: $r,s<t$.} 
It holds that
\begin{eqnarray}
0 &\leq& \frac{1}{2} \left( \sqrt{F_1(\tau(r)) F_2(\tau(s))} - \sqrt{F_1(\tau(s)) F_2(\tau(r))} \right)^2 \nonumber \\ &=& \frac{1}{2} F_1(\tau(r)) F_2(\tau(s)) + \frac{1}{2} F_1(\tau(s)) F_2(\tau(r)) - \sqrt{F_1 \cdot F_2 (\tau(r))} \cdot \sqrt{F_1 \cdot F_2 (\tau(s))}. \label{eq:square2}
\end{eqnarray}
Thus we have
\begin{eqnarray}
1-p_{\tau}(t,F_1,F_2 \mid \mathcal{E}_{r,s})  &=& \frac{1}{2} F_1(\tau(r)) F_2(\tau(s)) + \frac{1}{2} F_1(\tau(s)) F_2(\tau(r)) \nonumber \\ &\stackrel{\eqref{eq:square2}}{\geq}& \sqrt{F_1 \cdot F_2 (\tau(r))} \cdot \sqrt{F_1 \cdot F_2 (\tau(s))} \nonumber \\ &=& 1-p_{\tau}(t,\sqrt{F_1\cdot F_2},\sqrt{F_1\cdot F_2} \mid \mathcal{E}_{r,s}). \nonumber
\end{eqnarray}
Therefore, Equation~\eqref{eq:f1f2} holds for all realizations of the multi-set $\{t_1,t_2\}$, and thus, it holds also in expectation. This concludes the proof of the lemma.
\end{proof}

\section{Hardness for Activation-Based  Algorithms}\label{app:ex-activation}

In this section, we show a tight bound on the competition complexity of activation-based algorithms.
\begin{theorem}
The $(1-\epsilon)$-competition complexity of the class of  activation-based algorithms is $\Theta\left(\frac{\ln(\nicefrac{1}{\epsilon})}{\ln\ln(\nicefrac{1}{\epsilon})}\right)$. \label{thm:active-lower}
\end{theorem}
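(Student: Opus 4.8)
The plan is to establish the matching upper and lower bounds separately. For the upper bound, observe that time-based threshold algorithms form a subclass of activation-based algorithms, so Lemma~\ref{lem:blind-upper} already gives a $(1-\epsilon)$-competition complexity of $O\left(\frac{\ln(\nicefrac{1}{\epsilon})}{\ln\ln(\nicefrac{1}{\epsilon})}\right)$ for the larger class. Thus only the lower bound $\Omega\left(\frac{\ln(\nicefrac{1}{\epsilon})}{\ln\ln(\nicefrac{1}{\epsilon})}\right)$ requires new work, and it must rule out all activation-based algorithms, not merely the time-based ones handled in Lemma~\ref{lem:blind-lower}.

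For the lower bound I would reuse the hard instance from Lemma~\ref{lem:blind-lower}: a deterministic reward of value $1$ together with a random reward equal to $1+\sqrt{\epsilon}$ with probability $1-p$ and $0$ with probability $p=\nicefrac{1}{k}$, where $k = \frac{\ln(\nicefrac{1}{\epsilon})}{4\ln\ln(\nicefrac{1}{\epsilon})}$ (say). As computed there, a $(1-\epsilon)$-approximation forces the algorithm's expected reward above $1+\sqrt{\epsilon} - 3p\sqrt{\epsilon}$. Since only three values appear in the supports and $0$ is never worth taking, an activation-based algorithm is described by activation probabilities $g^1(t)$ for the value-$1$ copies and $g^{1+\sqrt{\epsilon}}(t)$ for the high copies, applied independently across copies. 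The key structural reduction is that, by Observation~\ref{obs:monotone}-style monotonicity arguments (or a direct exchange argument on the activation functions), one may assume each activation function is monotone in $t$ and, crucially, that the algorithm never activates a value-$1$ copy before it would activate a high copy arriving at the same time — i.e., there is effectively a (random, but here deterministic since activation depends only on $(i,j,v,t)$) ``switch region'' before which only $1+\sqrt{\epsilon}$ is accepted. This collapses an activation-based algorithm essentially to the same form analyzed in Lemma~\ref{lem:blind-lower}, namely a threshold-style policy governed by a time parameter $t$, and then the same two-case split applies: if the switch happens early ($t \le \nicefrac{1}{2k}$) a ``bad'' order in which all rewards arrive after $t$ occurs with probability $\ge \nicefrac14$ and yields a value of $1$ with probability $\ge \nicefrac12$, costing $\ge \frac{\sqrt{\epsilon}}{8}$; if the switch happens late ($t > \nicefrac{1}{2k}$) a ``bad'' order in which all deterministic rewards precede $t$ occurs with probability $\ge (p t)^k \ge \left(\frac{1}{2k^2}\right)^k$ and yields value $0$, again exceeding the $3p\sqrt{\epsilon}$ slack for small $\epsilon$. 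In both cases the algorithm fails the $(1-\epsilon)$ guarantee.

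The main obstacle is the reduction step: unlike time-based threshold algorithms, an activation-based algorithm can accept a value-$1$ copy with a fractional probability that varies with time and identity, so it does not literally have a single ``switch time.'' I expect to handle this by arguing that randomized, partial acceptance of the value-$1$ copies only hurts — intuitively, fractionally accepting a $1$ early both risks locking in $1$ when a $1+\sqrt{\epsilon}$ is still to come and does not help against the bad orders — so one can push all value-$1$ activation to the latest possible times, and likewise make $g^{1+\sqrt{\epsilon}}$ monotone, landing in the regime of Lemma~\ref{lem:blind-lower}. Formalizing this domination (that smoothing the activation functions toward a threshold-with-switch shape weakly increases expected reward on this instance) is the delicate part; everything after it is a direct transcription of the two-case computation already carried out for time-based thresholds.
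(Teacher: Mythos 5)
Your upper bound is exactly the paper's: subclass inclusion plus Lemma~\ref{lem:blind-upper}. The gap is entirely in the lower bound, and it is a genuine one: your argument hinges on a reduction ``activation-based $\Rightarrow$ (on this instance) time-based threshold with a single switch time,'' which you correctly flag as the delicate step but do not establish. The reason it is delicate is that activation-based algorithms have two extra degrees of freedom that a time-based threshold does not: the activation probability for the value-$1$ reward may be \emph{fractional} and may \emph{vary across copies} $j$ (the functions $g_{1,j}^1$ need not coincide). Your appeal to an ``Observation~\ref{obs:monotone}-style'' exchange argument does not transfer, since that observation is about sorting a deterministic threshold curve $\tau(t)$; to collapse arbitrary per-copy fractional activation into a single deterministic switch time you would need a new domination lemma, and the claimed direction (``pushing all value-$1$ activation later only helps'') is not obviously the right monotonicity for both bad events in your two-case split.

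The paper sidesteps the reduction entirely and argues directly about the activation functions. After normalizing ($g^0\equiv 0$, $g^{1+\sqrt{\epsilon}}\equiv 1$ without loss), it cases on the single quantity $\Pr[\exists j:\ t_{1,j}\le 2/k \text{ and } V_{1,j}\text{ activated}]$. If this is at least $1/k$, conditioning on all random copies arriving after $2/k$ (probability $\ge 1/10$) shows the algorithm locks in value $1$ with probability $\ge 1/k$, costing roughly $\sqrt{\epsilon}/(10k)$. If it is below $1/k$, each deterministic copy individually fails to activate with probability $\ge 1/k$, and by independence across copies one gets $\Pr[\ALG=0]\ge p^k(1/k)^k$. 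Note the paper also uses a \emph{different} $p$ than your time-based instance, namely $p = 1/\ln(\nicefrac{1}{\epsilon})$ rather than $1/k$: this is not cosmetic — with $p=1/k$ the Case-1 slack $3p\sqrt{\epsilon}$ is of the same order as the loss $\sqrt{\epsilon}/(10k)$ and the argument does not close, whereas the paper's smaller $p$ makes $3p \ll 1/(10k)$. So if you want to keep your $p=1/k$ you would need the reduction to genuinely go through (so you can invoke Lemma~\ref{lem:blind-lower} as a black box), while the direct approach requires re-tuning $p$. Either route needs the missing step you identified filled in.
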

\begin{proof}
The proof of the upper bound follow from Lemma~\ref{lem:blind-upper} since a time-based threshold algorithm is a special case of activation-based algorithm.  We next show a lower bound by showing that $ k = \frac{\ln(\nicefrac{1}{\epsilon})}{4\ln\ln(\nicefrac{1}{\epsilon})}$ is not sufficient (we assume for simplicity of the proof that $k$ is a large enough integer). Consider an instance consisting of two types of rewards:
the first has a deterministic value of 1, the second is random and has value $1+\sqrt{\epsilon}$ with probability $1-p$ and value 0 with probability $p$, where $p = \frac{1}{\ln(\nicefrac{1}{\epsilon})}$.

The expected value of the prophet is $1+\sqrt{\epsilon}-p\cdot \sqrt{\epsilon}$.
Thus, giving $(1-\epsilon)$-approximation means that the algorithm needs to have an expected reward of at least 
$$ (1-\epsilon)\cdot \E[\opt] = (1-\epsilon)(1+\sqrt{\epsilon}-p\cdot \sqrt{\epsilon}) > 1+\sqrt{\epsilon}-3p\cdot \sqrt{\epsilon} .$$

An algorithm is defined by an activation function $g_{i,j}^v:[0,1]\rightarrow [0,1]$ for each reward $(i,j)$, and a value $v$ in the support of $V_{i,j}$.
We assume without loss of generality  that the activation of $0$ is always $0$ and the activation of $1+\sqrt{\epsilon}$ is always $1$ (since it is the maximum value in the support of the instance). This defines the functions of $g_{2,j}^{v}$ for $v\in \{0,1+\sqrt{\epsilon}\}$.
Thus, an algorithm is defined by the functions $g_{1,j}^{1}:[0,1]\rightarrow [0,1]$ for $j\in[k]$.
We distinguish between two cases:
\paragraph{Case 1: $\Pr\left[\exists j\in [k] \mbox{ such that } t_{1,j}\leq \frac{2}{k} \mbox{ and } V_{1,j} \mbox{ is activated} \right] \geq \frac{1}{k}$.}

In this case, we say that an order of arrival of $N$ rewards is ``bad'' if all the random rewards (rewards of the second type) arrive after time $\frac{2}{k}$.
The probability of a ``bad'' order is 
\begin{equation}    
\Pr[ \mbox{``bad'' order} ] = \left(1-\frac{2}{k}\right)^{k} \geq \frac{1}{10}, \label{eq:bad-non-new}
\end{equation}
where the last inequality holds for every $k\geq 8$.

By the definition of the case we are considering, it holds that 
\begin{equation}
    \Pr[\ALG =1 \mid \text{``bad'' order}] \geq \frac{1}{k}, \label{eq:1bad-new}
\end{equation}
since whether a reward of the deterministic type is activated is independent from the arrival times of the random rewards.

Thus, the performance of $\ALG$ is bounded by 
$$ \E[\ALG] \leq 1+\sqrt{\epsilon}-\sqrt{\epsilon}  \cdot\Pr[\ALG =1 \mid \text{``bad'' order} ] \cdot \Pr[ \mbox{``bad'' order} ]\stackrel{\eqref{eq:bad-non-new},\eqref{eq:1bad-new}}{\leq} 1+\sqrt{\epsilon}-\sqrt{\epsilon} \cdot \frac{1}{k} \cdot \frac{1}{10} = 1+\sqrt{\epsilon}-\frac{\sqrt{\epsilon}}{10k}.$$
For sufficiently small $\epsilon$ (such that $k \geq 61$), it holds that $3p< \frac{1}{10k}$, which means that $\ALG$ does not provide a $(1-\epsilon)$-approximation, which concludes the proof of this case.

\paragraph{Case 2: $\Pr\left[\exists j\in [k] \mbox{ such that } t_{1,j}\leq \frac{2}{k} \mbox{ and } V_{1,j} \mbox{ is activated} \right] < \frac{1}{k}$.}
For every $j \in [k]$ it holds that
\begin{eqnarray}
\Pr[V_{1,j} \mbox{ is activated}] &\leq& \Pr\left[t_{1,j} \geq \frac{2}{k}\right] + \Pr\left[t_{1,j} \leq \frac{2}{k} \mbox{ and } V_{1,j} \mbox{ is activated}\right] \nonumber \\ &\leq& 1-\frac{2}{k}+\Pr\left[\exists j' \in [k] : t_{1,j'} \leq \frac{2}{k} \mbox{ and } V_{1,j'} \mbox{ is activated}\right] \leq 1-\frac{1}{k} \label{eq:is_act}
\end{eqnarray}
Therefore, it holds that
\begin{eqnarray}
\Pr[\ALG =0] \geq p^k \cdot \prod_{j \in [k]} \Pr[V_{1,j} \mbox{ is \textbf{not} activated}] \stackrel{\eqref{eq:is_act}}{\geq} p^k \cdot \left(\frac{1}{k}\right)^k  \geq p^{2k} = \sqrt{\epsilon}\label{eq:alg_0}
\end{eqnarray}
Thus, the performance of $\ALG$ is bounded by 
\begin{eqnarray}
     \E[\ALG]  & \leq & (1+\sqrt{\epsilon})(1-\Pr[\ALG =0])  \stackrel{\eqref{eq:alg_0}}{\leq} (1+\sqrt{\epsilon})(1 -\sqrt{\epsilon})=  1-\epsilon< 1+\sqrt{\epsilon} -3p\cdot\sqrt{\epsilon} . \nonumber 
\end{eqnarray}
This means that $\ALG$ does not provide a $(1-\epsilon)$-approximation, which concludes the proof of the theorem.
\end{proof}

\end{document}